\newtheorem{lemma}{Lemma}
\newtheorem{theorem}{Theorem}
\newtheorem{definition}{Definition}
\newtheorem{Judgment criteria}{\bf Judgment criteria}
\newtheorem*{Contraction algorithm for the adjacency matrix}{\bf Contraction algorithm for the adjacency matrix}
\newcounter{example}[section]
\newenvironment{example}[1][]{\refstepcounter{example}\par\medskip
   \noindent \textbf{Example~\theexample. #1} \rmfamily}{\medskip}
\newcolumntype{x}[1]{>{\centering\arraybackslash\hspace{0pt}}p{#1}}
\newcolumntype{C}[1]{>{\centering}p{#1}}
\newlength\figwidth
\newlength\sfigwidth
\begin{document}
\title{Constructing Higher-Dimensional Digital Chaotic Systems via Loop-State Contraction Algorithm}
	
\author{Qianxue Wang, Simin Yu, Christophe Guyeux, and Wei Wang
\thanks{This work was supported by the National Natural Science
Foundation of China under Grants 61671161 and 61801127.}
\thanks{Q. Wang, S. Yu and W. Wang are with the College of Automation, Guangdong University
of Technology, Guangzhou 510006, Guangdong, China. (e-mail: wangqianxue@gdut.edu.cn, siminyu@163.com).}
\thanks{C. Guyeux is with the Femto-ST Institute, UMR 6174 CNRS, 
University of Bourgogne-Franche-Comt\'{e}, Besan\c{c}on 25000, France. (e-mail: christophe.guyeux@univ-fcomte.fr).}
\thanks{Copyright (c) 2015 IEEE. Personal use of this material is permitted. However, 
permission to use this material for any other purposes must be obtained from the
IEEE by sending an email to pubs-permissions@ieee.org.}
}
	
\markboth{IEEE Transactions on Circuits and Systems I}%
{Wang \MakeLowercase{\textit{et al.}}:}
		
\maketitle

\begin{abstract}
    In recent years, the generation of rigorously provable chaos in finite precision digital domain has made a lot of progress in theory and practice, this article is a part of it. It aims to improve and expand the theoretical and application framework of higher-dimensional digital chaotic system (HDDCS). In this study, topological mixing for HDDCS  is strictly proved theoretically at first. Topological mixing implies Devaney's definition of chaos in compact space, but not vise versa. Therefore, the proof of topological mixing improves and expands the theoretical framework of HDDCS. Then, a general design method for constructing HDDCS via loop-state contraction algorithm is given. The construction of the iterative function uncontrolled by random sequences (hereafter called iterative function) is the starting point of this research. On this basis, this article put forward a general design method to solve the problem of HDDCS construction, and a number of examples illustrate the effectiveness and feasibility of this method. The adjacency matrix corresponding to the designed HDDCS is used to construct the chaotic echo state network (ESN) for the Mackey-Glass time series prediction. The prediction accuracy is improved with the increase of the size of the reservoir and the dimension of HDDCS, indicating that higher-dimensional systems have better prediction performance than lower-dimensional systems.


\end{abstract}
\begin{IEEEkeywords}
    HDDCS, loop-state contraction algorithm, iterative function, state transition diagram, echo state network.
\end{IEEEkeywords}
	
\IEEEpeerreviewmaketitle

\section{Introduction}
\IEEEPARstart{T}{he} ability to generate a succession of chaotic systems on digital devices can be very useful, both for the simulation of physical phenomena and for the reinforcement of cryptographic technology on such machines~\cite{Li:ITCSI2019,Li:Cryptanalysis:IA2018,Hua:Design:IS2017,Ye:An:N2017,Lambic:A:ND2017}. At the mathematical level, the formalization of discrete dynamic systems with chaotic behaviour is theoretically mature~\cite{Xie:On the cryptanalysis:SP2017,MAY:Logistic:Nature1976,Chen:Design and FPGA-based:ITCSI2017,Zhou:Hidden:IJBC2018}, but its practical realization on any digital devices frequently suffers from the finite precision of the latter~\cite{LiShujun:Rules:IJBC2006,Hua:Sine:ITIE2018,Liu:An:ITCSI2016}. 
Without external control, for a chaotic autonomous system in finite precision digital domain, it cannot strictly satisfy the mathematical definition of chaos~\cite{Yu:Design Principles:2018}. Therefore, the fundamental way is to add some external control method to solve the modeling problem of digital chaos~\cite{Lai:Dynamic:CSF2018,Luo:A:ND2018,Liu:Counteracting:IJBC2017,Luo:encryption:IA2019,Wang:On fuzzy:ITC2014}. For example, in 2010, a new theory of constructing chaos in the digital domain was proposed in~\cite{Guyeux:Hash:JACT10}, that is, the chaos generation strategy controlled by random sequences. The main feature of this strategy is that some bits are randomly updated by the iterative function at each iteration, while the rest remains unchanged. This is very different from the existing discrete-time chaotic system in infinite precision real number domain, that all bits will participate in all the update operations by the iterative function. Digital chaotic system constructed with the chaos generation strategy controlled by random sequences is proven to satisfy the Devaney's definition of chaos in ~\cite{Guyeux:Hash:JACT10,Bahi:XORshift:JNCA04,Bahi:PRNS:IJAS11,SMYu:integer:IJBC2014,Wang:DCS:book18,Devaney:Chaos:2003}. Since this system is directly generated in finite precision digital domain, there is no finite precision effect problem, and the binary sequences can be obtained without transformation and combination. Therefore, it fundamentally solves the dynamic degradation problem of digital chaotic systems caused by the finite precision.

\IEEEpubidadjcol

Some works were firstly done to disclose how to build digital chaotic systems in finite precision domain, such as Chaotic Iterations (CI) system~\cite{Guyeux:Hash:JACT10}. Then, a mark sequence was applied to avoid wasteful duplication of values in~\cite{Bahi:XORshift:JNCA04}, leading to an obvious speed improvement. In~\cite{Bahi:PRNS:IJAS11}, chaotic combination of two input entropic streams has solved flaws exhibited in the system designed in~\cite{Bahi:XORshift:JNCA04}.
The chaos generation strategy was implemented in~\cite{SMYu:integer:IJBC2014}, through a sample-hold circuit and a decoder circuit so as to convert the uniform noise signal into a random sequence. The second chaos generation strategy named 
Chaotic Bitwise Dynamical System (CBDS) was proposed in~\cite{Wang:DCS:book18}. All the above related works deal with 1-D chaotic maps.

In a recent study, the concept of HDDCS based on the above chaos generation strategy controlled by random sequences has been proposed in~\cite{Wang:hddcs:IEEECSI2016}. It is simple and does not require floating-point operations, which makes it provides higher computational speeds, and it is more convenient for hardware implementation than the real number domain chaotic systems.  As far as we know, Topological mixing implies Devaney's definition of chaos in compact space, but not vise versa~\cite{Xiong:Chaos:ICTP1991,Zhou:Dynamics:1997}. Therefore, based on the fact that HDDCS satisfies Devaney's definition of chaos, this article will prove that the system is also topological mixing. Note that topological mixing is a stronger version of transitivity, and topological mixing implies transitivity.

The main construction method of HDDCS  in~\cite{Wang:hddcs:IEEECSI2016} is the experimental trial-and-error method. This method may be effective for modeling low-dimensional chaotic systems, but more and more limitations appear when the dimension increases. This article presents a more general design method for HDDCS via loop-state contraction algorithm.
The design of iterative functions is the starting point of this research, and then using simple bitwise operations to create the iterative function. In this way, the connectivity of the state transition diagram of $G_F$ in HDDCS will not change as finite precision changes. Then the set of all possible iterative functions $S$ is obtained by that. Note that the strong connectivity for $G_F$ in HDDCS is key to chaos. Therefore, the iterative function $F\in S$ is constructed according to the loop-state contraction algorithm, so that the state transition diagram of $G_F$ corresponding to $F$ is strongly connected. The loop is a non-empty directed path in the state transition diagram, where the first state and the last state are the same~\cite{Dharwadker:graphtheory2007}.
To contract the loop in a state transition diagram is to replace the set of states in the loop (and each occurrence of all the states in the set in any edge) by a single new state, and to delete any subsequent self-loops (edges that join a state to itself) and multi-edges~\cite{Khuller:sccnon1998}. Each edge in the resulting state transition diagram is identified with the corresponding edge in the original state transition diagram or, in the case of multiedges, the single remaining edge is identified with any one of the corresponding edges in the original state transition diagram.

Finally, the adjacency matrix corresponding to HDDCS is used to construct a chaotic ESN. ESN~\cite{Jaeger:ESN:2001} is a Recurrent Neural Network (RNN) proposed by Jaeger in 2001. Because of its recursive processing of historical information, it has short-term memory capabilities, and it is suitable for processing the information with strong correlation in time and space sequence, especially the ESN can guarantee fastness and global optimal, which avoids the inherent characteristics of the traditional RNN, such as easy to fall into local optimality and difficulty in stability analysis. Therefore, ESN shows strong application prospect in speech recognition~\cite{Haridas:ESN:2018}, language modeling~\cite{Morse:ESN:2017}, modeling and control of nonlinear system ~\cite{Yao:ESN:2018}, etc.  Especially in 2004, the prediction accuracy of Mackey-Glass time series  achieved a qualitative leap~\cite{Jaeger:ESN:2004,Li:Chaotic:ITNNLS2012,Ren:Performance:ITC2020}. As a time series with a chaotic attractor, the Mackey-Glass system has become one of the benchmark problems for time series prediction in both the neural network and fuzzy logic fields~\cite{Shi:MGS:Ieee2007}. In this article, the adjacency matrix corresponding to HDDCS is used to construct the network structure of the reservoir, which makes ESN be a chaotic ESN~\cite{Bahi:Neural:2012}.

The remainder of this article is organized as follows. The description of HDDCS and the proof of its topologically mixing are provided in Sec.~\ref{DCSStrong}. Section~\ref{Construction of HDDCS} presents the detail of the general design method of constructing iterative function via loop-state contraction algorithm. The application of the adjacency matrix corresponding to HDDCS to construct chaotic ESN for the Mackey-Glass time-series prediction is demonstrated in Sec.~\ref{Prediction}. The last section concludes the article.

\section{Proof of topological mixing of $G_F$}
\label{DCSStrong}
Topological mixing is a stronger property than transitivity~\cite{Zhou:Dynamics:1997}. This section proves that if the state transition diagram of $G_F$ in HDDCS is strongly connected, then $G_F$ is topological mixing in a compact space. Therefore, the description of $G_F$ in~\cite{Wang:hddcs:IEEECSI2016} is reviewed at first, then the concepts of metric space and compact space are  introduced, and then  $G_F\colon \mathcal{E}\to \mathcal{E}$ is proved to be  a continuous mapping in $(\mathcal{E}, d)$ (Theorem~\ref{thm1}). Finally, it is proved that $G_F$ is topological mixing in a compact space (Theorem~\ref{thm2}).

\subsection{Description of $G_F$ in HDDCS }
Let us recall the description of $G_F$ in HDDCS, as previously published in \cite{Wang:hddcs:IEEECSI2016}. Define $\mathcal{E}$ as the set of points $E$ of the form 
$((s, u, \ldots, v), (x_1, x_2, \ldots, x_m))$, where $s, u, \ldots, v$ are $m$ independent 
random sequences, while $x_1$, $x_2$, $\ldots$, $x_m$ are $m$ real numbers, each one is represented by $N$-bit floating numbers (limited accuracy), that is $N=P+Q$, $P$ is the number of binary digits for integral part and $Q$ is the number of binary digits for fractional part. Let us now define the mapping $G_F:\mathcal{E}\rightarrow\mathcal{E}$ as
\begin{equation}
	\begin{IEEEeqnarraybox}[][c]{lll}
	\IEEEstrut
	G_F(E) & = & G_F((s, u, \ldots, v), (x_1, x_2, \ldots, x_m))\\
	  & = & ((\sigma(s), \sigma(u), \ldots, \sigma(v)), (H_{F_1}(i(s), (x_1, x_2, \\
	  & & \ldots, x_m)), H_{F_2}(i(u), (x_1, x_2, \ldots, x_m)),\\ 
	  & & \ldots, H_{F_m}(i(v),(x_1, x_2, \ldots, x_m))))\,. 
	\IEEEstrut
	\end{IEEEeqnarraybox}
	\label{GF}
	\end{equation}	
According to Eq.\eqref{GF}, the general form of the iterative equation controlled by random sequences for HDDCS is
\begin{equation}
	E^{k+1}=G_F(E^k) (k=0,1,2,\ldots)\,.  
	\label{GF2}
	\end{equation} 
In Eq.\eqref{GF}, ${{H}_{{F_{j}}}}(i(u),({x_{1}},{x_2},\ldots ,{x_m}))(j=1,2,\ldots ,m)$ is
\begin{equation}
	\resizebox{1\hsize}{!}{$
	\left\{
	\begin{IEEEeqnarraybox}[][c]{lll}
	\IEEEstrut
	H_{F_1}(i(s), (x_1, x_2, \ldots, x_m)) &=& ((x_1\cdot\overline{i(s)})+(F_1(\cdot)\cdot i(s)))\,, \\
	H_{F_2}(i(u), (x_1, x_2, \ldots, x_m)) &=& ((x_2\cdot\overline{i(u)})+(F_2(\cdot)\cdot i(u)))\,, \\
	&\vdots& \\
	H_{F_m}(i(v), (x_1, x_2, \ldots, x_m)) &=& ((x_m\cdot\overline{i(v)})+(F_m(\cdot)\cdot i(v)))\,,
	\IEEEstrut
   	\end{IEEEeqnarraybox}
	\right.
	\label{BVN3}
	$}
\end{equation}        
where the operators $``\cdot"$, $``\overline{(\cdot)}"$, and $``+"$ denote bitwise AND, bitwise NOT (negation), and bitwise OR, respectively, $\sigma(w)$ ($w\in\{s, u, \cdots, v\}$) shifts one item in the one-sided infinite sequence $w={{w}^{1}}{{w}^2}\cdots {{w}^n}\cdots $ to the left, and at the $k$-th iteration,    
${{\sigma }^{k}}(w)={{w}^{k+1}}{{w}^{k+2}}\cdots {{w}^n},k=1,2,\ldots ,$
and $i(w)( w\in\{s, u, \cdots, v\})$ is equal to the overflow 
from the left shifting of the sequence $w$.

Note that in Eq.\eqref{BVN3}, ${{F}_{j}}(\cdot )(j=1,2,\ldots ,m)$ represents an iterative function, and its general form is
\begin{equation}
	\left\{
	\begin{IEEEeqnarraybox}[][c]{ll}
	\IEEEstrut
	F_1(\cdot) & = F_1(x_{1},x_{2},\ldots, x_{m})\,,\\
	F_2(\cdot) & = F_2(x_{1},x_{2},\ldots, x_{m})\,,\\
		  &\vdots \\
	F_m(\cdot) & = F_m(x_{1},x_{2},\ldots, x_{m})\,.
	\IEEEstrut
	\end{IEEEeqnarraybox} \right.
	\end{equation}
How to select the appropriate iterative function ${F_{1}},{F_2},\ldots ,{F_m}$ to make the state transition diagram of ${G_F}$ in HDDCS represented by Eq.\eqref{GF2} is strongly connected is an important content of this article.

\subsection{Metric space}
In mathematics, a distance function on a given set $\mathcal{E}$ is a function $d:\mathcal{E}\times \mathcal{E} \rightarrow R$, where $R$ denotes the set of real numbers, that satisfies the following conditions:
\begin{enumerate}
    \item Non-negativity and identity of indiscernibles: $d(E, \hat{E})\geq 0$, and  $d(E, \hat{E})=0$ iff $E=\hat{E}$. 
    \item Symmetry: $d(E, \hat{E})=d(\hat{E}, E)$. 
    \item Triangle inequality: $d(E, \hat{E})\leq d(E, \tilde{E})+d(\tilde{E}, \hat{E})$. 
    \end{enumerate}
It should be noted that the distance $d$ cannot be infinite, otherwise the above three properties cannot be satisfied.

According to~\cite{Wang:hddcs:IEEECSI2016},  the distance for HDDCS in the metric space is defined as
\begin{equation}
	\label{distance}
	\begin{IEEEeqnarraybox}[][c]{rcl}
	\IEEEstrut
	d(E, \hat{E}) & = & d(((s, u, \ldots, v), (x_1, x_2, \ldots, x_m)),\\
	&&((\hat{s}, \hat{u}, \ldots, \hat{v}), (\hat{x}_1, \hat{x}_2, \ldots, \hat{x}_m)))\\
	&=&d_s(s, \hat{s})+d_u(u, \hat{u})+\cdots
	+d_v(v, \hat{v})\\
	&&+d_x((x_1, x_2, \ldots, x_m), (\hat{x}_1, \hat{x}_2, \ldots, \hat{x}_m))\,,
	\IEEEstrut
	\end{IEEEeqnarraybox} 
	\end{equation}
where
\begin{equation*}
	\left\{
	\begin{IEEEeqnarraybox}[][c]{lll}
	\IEEEstrut
		d_s(s, \hat{s})&=&\sum_{k=1}^\infty \frac{|s^k-\hat{s}^k|}{2^{Nk}}\,, \\
		d_u(u, \hat{u})&=&\sum_{k=1}^\infty \frac{|u^k-\hat{u}^k|}{2^{Nk}}\,, \\
			&\vdots&\\
		d_v(v, \hat{v})&=&\sum_{k=1}^\infty \frac{|v^k-\hat{v}^k|}{2^{Nk}}\,, 
	\IEEEstrut
	\end{IEEEeqnarraybox}%
	\right.
	\end{equation*}
\begin{equation*}
	\label{distancex}
	\begin{IEEEeqnarraybox}[][c]{rl}
		&d_{x}((x_1, x_2, \ldots, x_m), (\hat{x}_1, \hat{x}_2, \ldots, \hat{x}_m ))\\
		= &\sqrt{(x_1-\hat{x}_1)^2+(x_2-\hat{x}_2)^2+\cdots+(x_m-\hat{x}_m)^2}, 
	\IEEEstrut
	\IEEEstrut
\end{IEEEeqnarraybox} 
\end{equation*}
$x_1,\hat{x}_1,x_2,\hat{x}_2,\ldots,x_m,\hat{x}_m$ are binary forms of real numbers 
with $N$-bit finite precision, and $0\leq d_{x}\leq\sqrt{m}(2^P-2^{-Q})$.

In Eq.\eqref{distance}, the general expression of two different $m$ one-sided infinite random sequences as
\begin{equation*}
	\left\{
		\begin{IEEEeqnarraybox}[][c]{ll}
		\IEEEstrut
			s & = s^1s^2\cdots s^n\cdots\\
			u & = u^1u^2\cdots u^n\cdots\\
			  & \vdots\\
			v & = v^1v^2\cdots v^n\cdots
		\IEEEstrut
		\end{IEEEeqnarraybox}
		\right.\,,
	\left\{
			\begin{IEEEeqnarraybox}[][c]{ll}
			\IEEEstrut
				\hat{s} & = \hat{s}^1\hat{s}^2\cdots \hat{s}^n\cdots\\
				\hat{u} & = \hat{u}^1\hat{u}^2\cdots \hat{u}^n\cdots\\
				  & \vdots\\
				\hat{v} & = \hat{v}^1\hat{v}^2\cdots \hat{v}^n\cdots
			\IEEEstrut
			\end{IEEEeqnarraybox}
			\right.\,.
\end{equation*}			

Obviously, the above Eq.\eqref{distance} satisfies the non-negativity and identity of indiscernible, the symmetry and triangular inequality properties, so this is a distance that makes $(\mathcal{E}, d)$ a metric space. Note that the definition of distance is not unique. As long as it satisfies the three properties and is not infinite, it can be used as the definition of distance. Choosing an appropriate definition of distance is more conducive to the study of the problem.

\subsection{Continuity of ${G_F}$ and compact space}
\begin{lemma}
	\label{lemma1}
	Let $w\in\{s, u, \cdots, v\}$, $w=w^1w^2w^3 \ldots w^n \ldots$ and $\hat{w} = \hat{w}^1\hat{w}^2\hat{w}^3 \ldots \hat{w}^n \ldots$,
	the metric distance $d$ satisfies that if
	$w^i = \hat{w}^i$ for $i=1,2,3,\ldots n$, then
	$d(w,\hat{w})\le 1/{2^{Nn}}$. If $d(w,\hat{w})\le 1/{2^{Nn}}$, there must be $w^i = \hat{w}^i$ for $i=1,2,3,\ldots n$.
	And $w^k, \hat{w}^k \in [0,2^P-2^{-Q}]$ for $k\in \mathbb{Z}^+$.
	\end{lemma}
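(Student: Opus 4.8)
The plan is to reduce everything to the single-sequence series that defines the metric, namely $d(w,\hat{w})=\sum_{k=1}^{\infty}|w^k-\hat{w}^k|/2^{Nk}$, and then to exploit two elementary facts about the $N=P+Q$ fixed-point symbols. Since each $w^k$ is represented by $P$ integral and $Q$ fractional bits, it ranges over $[0,2^P-2^{-Q}]$ (this is the range claim itself, read straight off the representation), the largest possible gap $|w^k-\hat{w}^k|$ is $2^P-2^{-Q}$, and the smallest nonzero gap is $2^{-Q}$. The maximal gap will drive the forward ``agreement $\Rightarrow$ small distance'' implication, while the minimal gap will drive the reverse implication. A useful identity to keep in hand is $2^P-2^{-Q}=2^{-Q}(2^N-1)$, which is exactly what makes the geometric constants cancel.

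For the forward implication I would assume $w^i=\hat{w}^i$ for $i=1,\dots,n$, so the first $n$ terms of the series vanish and only the tail $\sum_{k=n+1}^{\infty}|w^k-\hat{w}^k|/2^{Nk}$ survives. Bounding each surviving numerator by the maximal gap $2^P-2^{-Q}$ and summing the geometric tail $\sum_{k=n+1}^{\infty}2^{-Nk}=2^{-Nn}/(2^N-1)$ gives, after substituting $2^P-2^{-Q}=2^{-Q}(2^N-1)$, the clean bound $d(w,\hat{w})\le 2^{-(Nn+Q)}\le 1/2^{Nn}$. This step is routine once the cancellation is set up.

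For the reverse implication I would argue by contraposition: suppose $w^i\neq\hat{w}^i$ for some $i\le n$ and let $j\le n$ be the smallest such index. Because the symbols live on a grid of spacing $2^{-Q}$, that single term already forces $d(w,\hat{w})\ge 2^{-Q}/2^{Nj}=2^{-(Nj+Q)}$. When $j\le n-1$ this is at least $2^{-(N(n-1)+Q)}=2^{-(Nn-P)}=2^{P}\cdot 2^{-Nn}>1/2^{Nn}$, so the distance exceeds the threshold and the contrapositive closes.

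The main obstacle is the boundary case $j=n$, where the one-term estimate yields only $d(w,\hat{w})\ge 2^{-(Nn+Q)}$, which is strictly smaller than $1/2^{Nn}$; thus the crude bound does not by itself separate ``first disagreement at position $n$'' from ``agreement on the first $n$ symbols.'' Handling this cleanly requires care with strict versus non-strict inequalities at the separating scale $1/2^{Nn}$ (the forward bound in fact leaves a factor $2^{-Q}$ of slack, and it is this slack that must be matched against the minimal gap). I expect this to be the delicate point of the argument; fortunately, for the downstream use of the lemma in the continuity proof of $G_F$ (Theorem~\ref{thm1}) only the scale separation between consecutive blocks of symbols is needed, so the precise behaviour exactly at the boundary is immaterial there.
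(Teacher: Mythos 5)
The first thing to know is that the paper offers no proof of Lemma~\ref{lemma1} at all: the lemma is stated bare and immediately put to use in Theorems~\ref{thm1} and~\ref{thm2}, so your attempt cannot be measured against an official argument and must stand on its own. On its own terms, most of it is sound: the range claim read off the $P{+}Q$-bit representation, the forward implication via the identity $2^P-2^{-Q}=2^{-Q}(2^N-1)$ and the geometric tail (giving the stronger bound $d(w,\hat w)\le 2^{-(Nn+Q)}$), and the contrapositive step when the first disagreement sits at some $j\le n-1$ (which needs $P\ge 1$ so that $2^P\cdot 2^{-Nn}>2^{-Nn}$, true throughout the paper) are all correct.

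The gap is the boundary case $j=n$ that you flag as ``the delicate point,'' and here you are too optimistic: no amount of care with strict versus non-strict inequalities can repair it, because the reverse implication is \emph{false as stated}. Take $\hat w$ identical to $w$ except at position $n$, where the symbols differ by the minimal grid gap $2^{-Q}$; then
\begin{equation*}
d(w,\hat w)=\frac{2^{-Q}}{2^{Nn}}=2^{-(Nn+Q)}\le \frac{1}{2^{Nn}},
\end{equation*}
yet $w^n\ne\hat w^n$ (for $Q=0$ the threshold is met with equality, so the claim still fails). Worse, no threshold of any kind can separate the two events: a pair agreeing on the first $n$ symbols and differing maximally thereafter sits at exactly the same distance $2^{-(Nn+Q)}$ as this counterexample, so ``agree on the first $n$ symbols'' is not characterized by any distance cutoff, strict or not. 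The provable statement is necessarily asymmetric: agreement on the first $n$ symbols implies $d\le 2^{-(Nn+Q)}$, while the \emph{strict} bound $d< 2^{-(Nn+Q)}$ (note the extra factor $2^{-Q}$) implies agreement on the first $n$ symbols. Your closing instinct about downstream use is right: in the proof of Theorem~\ref{thm1} the reverse direction is invoked with a full block of slack (e.g.\ $d_s(s,\hat s)\le 2^{-N(k_0+2)}$), and since $N>Q$ the corrected asymmetric lemma still yields agreement on the first $k_0+1$ symbols, which is enough for the continuity argument after minor index adjustments. So the honest conclusion of your analysis should be that the lemma needs restating, not that a cleverer proof of it is pending.
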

The lemma can let us quickly determine whether the two sequences are close to each other.
From intuitive observation, we can assure two sequences are close to each other as long as they have a considerable number of consistent foregoing entries.

\begin{theorem}
	\label{thm1}
$G_F:\mathcal{E}\rightarrow\mathcal{E}$ is a continuous function in $(\mathcal{E},d)$.
\end{theorem}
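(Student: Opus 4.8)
The plan is to prove continuity directly from the $\varepsilon$--$\delta$ definition, exploiting two structural features of $(\mathcal{E},d)$: the real-valued coordinates $x_1,\ldots,x_m$ live on a finite $N$-bit grid and are therefore discrete, while the symbolic coordinates $s,u,\ldots,v$ are governed by Lemma~\ref{lemma1}, which ties smallness of the distance to agreement of leading entries. Fix a point $E=((s,u,\ldots,v),(x_1,\ldots,x_m))$ and $\varepsilon>0$; I will produce a single $\delta>0$ working uniformly, so the argument in fact shows $G_F$ is locally Lipschitz rather than merely continuous. Because $d=d_s+d_u+\cdots+d_v+d_x$ is a sum of nonnegative terms, the hypothesis $d(E,\hat{E})<\delta$ bounds each summand by $\delta$ simultaneously.

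First I would force the discrete data to coincide. Choosing $\delta\le 1/2^N$ and invoking Lemma~\ref{lemma1} with $n=1$ gives $w^1=\hat{w}^1$ for every control sequence $w\in\{s,u,\ldots,v\}$; since $i(w)$ is exactly the overflow produced by the left shift, this yields $i(w)=i(\hat{w})$. The same $\delta$ handles the real part: the minimal nonzero separation between two distinct $N$-bit numbers is $2^{-Q}\ge 2^{-N}\ge\delta$, so $d_x<\delta$ can only hold when $x_j=\hat{x}_j$ for all $j$.

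The crux, and the step I expect to do the real work, is to observe that these equalities make the updated real coordinates coincide exactly. Each new coordinate is $H_{F_j}(i(w),(x_1,\ldots,x_m))=(x_j\cdot\overline{i(w)})+(F_j(\cdot)\cdot i(w))$, a purely bitwise selection between the old value $x_j$ and $F_j(x_1,\ldots,x_m)$; with identical masks $i(w)=i(\hat{w})$ and identical arguments $x_j=\hat{x}_j$, the outputs agree bit for bit. Hence the $d_x$-contribution to $d(G_F(E),G_F(\hat{E}))$ vanishes, and no discontinuity can arise from the nonlinear functions $F_j$ --- this is precisely where finite precision rescues continuity, and is the only genuinely conceptual obstacle, since one might naively fear that the bitwise operations destroy it.

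It remains to control the symbolic part after the shift. Since $w^1=\hat{w}^1$, a direct computation gives $d_w(\sigma(w),\sigma(\hat{w}))=2^N\,d_w(w,\hat{w})$, so the shift amplifies each symbolic distance by the fixed factor $2^N$. Summing, $d(G_F(E),G_F(\hat{E}))=2^N\bigl(d_s(s,\hat{s})+\cdots+d_v(v,\hat{v})\bigr)\le 2^N\,d(E,\hat{E})<2^N\delta$. Taking $\delta=2^{-N}\min\{1,\varepsilon\}$ --- so that simultaneously $\delta\le 1/2^N$ for the agreement conditions and $2^N\delta<\varepsilon$ --- then gives $d(G_F(E),G_F(\hat{E}))<\varepsilon$, completing the argument. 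The remaining work is purely the routine verification of the shift identity and the bookkeeping of the threshold.
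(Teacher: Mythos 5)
Your proof is correct, and it shares the paper's overall skeleton---shrink $\delta$ until the discrete data coincide (the real coordinates $x_j$ and the overflow bits $i(w)$), observe that the $d_x$-contribution to the image distance then vanishes identically, and control the shifted control sequences---but your final step runs on a genuinely different and sharper mechanism. The paper picks $k_0$ with $m/2^{Nk_0}<\varepsilon$, forces agreement of the first $k_0+2$ entries of each sequence, and then bounds each $d_w(\sigma(w),\sigma(\hat w))$ by $1/2^{N(k_0+1)}$ via the prefix-agreement direction of Lemma~\ref{lemma1}, so the required depth of agreement grows as $\varepsilon$ shrinks. You instead use the exact scaling identity $d_w(\sigma(w),\sigma(\hat w))=2^N d_w(w,\hat w)$, valid once $w^1=\hat w^1$, which turns the whole argument into the Lipschitz-type estimate $d(G_F(E),G_F(\hat E))\le 2^N d(E,\hat E)$ on any ball of radius $2^{-N}$, so that $\delta$ can be taken linear in $\varepsilon$. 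This buys three things: Lemma~\ref{lemma1} is only ever needed at depth $n=1$; the conclusion is strictly stronger (uniform local Lipschitz continuity, hence uniform continuity, not bare continuity); and you sidestep the paper's delicate passage from the single hypothesis $d(E,\hat E)<\delta$ to the per-sequence bounds of Eq.~\eqref{eq8}, which the paper phrases as conditions to ``ensure'' rather than derives---each summand is trivially below $\delta$, but not obviously below $1/2^{N(k_0+2)}$, whereas in your argument every step is a genuine deduction from $d(E,\hat E)<\delta$. The paper's route, in exchange, avoids computing the shift's effect on the metric exactly and stays closer to the prefix-counting intuition that Lemma~\ref{lemma1} encodes.
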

\begin{proof}
Before the proof of topological mixing of ${G_F}$, let's first review the definition of continuity. ${G_F}$ is continuous in $(\mathcal{E},d)$. In other words, for all $\varepsilon>0$, there always exists $\delta>0$ such that $d(E, \hat{E})<\delta$ with $E\neq\hat{E}$, implies that $d(G_F(E), G_F(\hat{E}))<\varepsilon$, the function $G_F$ is consecutive at $\hat{E}$.
\begin{enumerate} 
\item According to the definition of continuity, $\forall \varepsilon >0$, a positive integer ${k_0}=\left\lfloor ({{\log }_2}m-{{\log }_2}\varepsilon )/N \right\rfloor +1$ can always be found, so that $m/{2^{N{k_0}}}<\varepsilon$ holds, where $m$ is the dimension, $N=P+Q$ is the finite precision, and the operator ``$\left \lfloor {} \right\rfloor $" means round down.
\item According to the found ${k_0}$, $\exists \delta =m/{2^{N({k_0}+1)}}$, from Eq.\eqref{distance}, $d(E,\hat{E})<\delta$ is obtained as
	\begin{equation}
		\begin{IEEEeqnarraybox}[][c]{lll}
		\IEEEstrut
		d(E, \hat{E}) & = & d_s(s, \hat{s})+d_u(u, \hat{u})+\cdots+d_v(v, \hat{v})+\\
		  & & d_x((x_1, x_2, \ldots, x_m), (\hat{x}_1, \hat{x}_2, \ldots, \hat{x}_m)) \\
		  & <&m/{2^{N({k_0+1})}}=\delta\,, 
		\IEEEstrut
		\end{IEEEeqnarraybox}
		\label{distance2}
		\end{equation}	
By making Eq.\eqref{distance2} hold, we must set
$$({x_{1}},{x_2},\ldots ,{x_m})=({\hat{x}_{1}},{\hat{x}_2},\ldots ,{\hat{x}_m}),$$
otherwise ${d_x}(({x_{1}},{x_2},\ldots ,{x_m}),({\hat{x}_{1}},{\hat{x}_2},\ldots ,{\hat{x}_m}))\ge {2^{-Q}}$, but considering that the arbitrary value ${k_0}=\left\lfloor ({{\log }_2}m-{{\log }_2}\varepsilon )/N \right\rfloor +1$, may make $d(E,\hat{E})\ge {2^{-Q}}\ge m/{2^{N({k_0}+1)}}$, which contradicts to Eq.\eqref{distance2}.

\item We must set $({x_{1}},{x_2},\ldots ,{x_m})=({\hat{x}_{1}},{\hat{x}_2},\ldots ,{\hat{x}_m})$ to obtain ${d_x}(({x_{1}},{x_2},\ldots ,{x_m}),({\hat{x}_{1}},{\hat{x}_2},\ldots ,{\hat{x}_m}))=0$. According to Eq.\eqref{distance2}, one can obtain
\begin{equation}
	\begin{IEEEeqnarraybox}[][c]{lll}
	\IEEEstrut
	d(E, \hat{E}) & = & d_s(s, \hat{s})+d_u(u, \hat{u})+\cdots+d_v(v, \hat{v})\\
	  & <&m/{2^{N({k_0+1})}}=\delta\,, 
	\IEEEstrut
	\end{IEEEeqnarraybox}
	\label{distance8}
	\end{equation}	
By making Eq.\eqref{distance8} hold, we should ensure that the distances of $m$ pairs of independent random sequences satisfy
\begin{equation}
	\label{eq8}
	\left\{
	\begin{IEEEeqnarraybox}[][c]{lll}
	\IEEEstrut
		d_s(s, \hat{s})&\leq&1/{2^{N({k_0+2})}}\,, \\
		d_u(u, \hat{u})&\leq&1/{2^{N({k_0+2})}}\,, \\
			&\vdots&\\
		d_v(v, \hat{v})&\leq&1/{2^{N({k_0+2})}}\,, 
	\IEEEstrut
	\end{IEEEeqnarraybox}%
	\right.
	\end{equation}                                    

Referring to the Lemma \ref{lemma1}, if ${d_s}(s,\hat{s})\le 1/{2^{N({k_0}+2)}}$, there must be ${s^i}={{ \hat{s}}^i}(i=1,2,\ldots ,{k_0}+2)$. Similar results can be obtained as ${d_{u}}(u,\hat{u})\le 1/{2^{N({k_0}+2)}}$, $\ldots$, ${d_{v}}(v,\hat{v})\le 1/{2^{N({k_0}+2)}}$, then ${{u}^i}={{ \hat{u}}^i}(i=1,2,\ldots ,{k_0}+2)$, $\ldots$, ${{v}^i }={{\hat{v}}^i}(i=1,2,\ldots ,{k_0}+2)$. So, we can obtain
\begin{equation*}
	\begin{IEEEeqnarraybox}[][c]{lll}
	\IEEEstrut
	d(E, \hat{E}) & = & d_s(s, \hat{s})+d_u(u, \hat{u})+\cdots+d_v(v, \hat{v})\\
	  & \leq&m/{2^{N({k_0+2})}}<m/{2^{N({k_0+1})}}=\delta .
	\IEEEstrut
	\end{IEEEeqnarraybox}
	\label{distance7}
	\end{equation*}	

\item The following further proves that when $d(E,\hat{E})<\delta $, the inequality $d({G_F}(E),{G_F}(\hat{E}))<\varepsilon $ always holds. In fact, according to the definition of $d({G_F}(E),{G_F}(\hat{E}))$, we have
\begin{equation}
	\label{eq9}
	\begin{IEEEeqnarraybox}[\IEEEeqnarraystrutmode\IEEEeqnarraystrutsizeadd{2pt}{2pt}][c]{rCl}
	\IEEEeqnarraymulticol{3}{l}{d(G_F(E), G_F(\hat{E})\!)}\\
	& = & d_s(\sigma(s), \sigma(\hat{s})\!)+d_u(\sigma(u),\sigma(\hat{u})\!)+ \cdots   \\
	&&  d_v(\sigma(v),\sigma(\hat{v})\!)\!+\!d_x(\!(H_{F_1}(i(s),(x_1, x_2, \ldots, x_m)\!), \\
	&&	H_{F_2}(i(u),(x_1, x_2, \ldots, x_m)\!), \ldots,\\
	&&H_{F_m}(i(v), (x_1, x_2, \ldots, x_m)\!)\!), \\
	&&(H_{F_1}(i(\hat{s}),(\hat{x}_1, \hat{x}_2, \ldots, \hat{x}_m)\!),\\
	&& H_{F_2}(i(\hat{u}), (\hat{x}_1, \hat{x}_2, \ldots, \hat{x}_m)\!), \ldots,\\
	&&H_{F_m}(i(\hat{v}), (\hat{x}_1, \hat{x}_2, \ldots, \hat{x}_m)\!)\!)\!),		
	\end{IEEEeqnarraybox}	
	\end{equation}
where $\sigma (\cdot )$ is the operation of shifting the sequence one place to the left, so the first ${k_0}+1$ elements of $\sigma (s)$ and  $\sigma (\hat{s})$ are the same, then ${d_s}(\sigma (s),\sigma (\hat{s}))\le 1/{2^{N({k_0}+1)}}$ from Lemma ~\ref{lemma1}. Also, first ${k_0}+1$ elements of $\sigma (u), \ldots, \sigma (v)$ and $\sigma (\tilde{u}), \ldots,\sigma (\tilde{v})$ are the same and ${d_{u}}(\sigma (u),\sigma (\hat{u}))\le 1/{2^{N({k_0}+1)}},\ldots ,{d_{v}}(\sigma (v),\sigma (\hat{v}))\le 1/{2^{N({k_0}+1)}}$, which makes the inequality
\begin{equation}
	\begin{IEEEeqnarraybox}[][c]{ll}
	\IEEEstrut
	&{d_s}(\!\sigma (s),\sigma (\hat{s})\!)\!+\!{d_{u}}(\!\sigma (u),\sigma (\hat{u})\!)\!+\!\cdots \!+\!{d_{v}}(\!\sigma (v),\sigma (\hat{v})\!)\\
	\le  & m/{2^{N({k_0}+1)}}<m/{2^{N{k_0}}}<\varepsilon 	
	\IEEEstrut
	\end{IEEEeqnarraybox}
	\label{eq10}
	\end{equation}	
hold.
\item In Eq.\eqref{eq9}, notice that $i(\cdot )$ is equal  to  the  overflow from the left shifting of the sequence, that is, $i(s)={s^{1}}, i(u)={{u}^{1}},\ldots ,i(v)={{v}^{1}}$. Because the first ${k_0}+2$ elements of $s,u,\ldots,v$ and $\hat{s},\hat{u},\ldots,\hat{v}$ are exactly the same, we know that ${s^{1}}={{\hat{s}}^{1}},{{u}^{1}}={{\hat{u}}^{ 1}},\ldots ,{{v}^{1}}={{\hat{v}}^{1}}$,  ${x_{1}}={\hat{x}_{1 }},{x_2}={\hat{x}_2},\ldots ,{x_m}={\hat{x}_m}$, then, one has
\end{enumerate}
\begin{equation}
	\left\{
	\begin{IEEEeqnarraybox}[][c]{lll}
	\IEEEstrut
	H_{F_1}(s^1, (x_1, x_2, \ldots, x_m)\!)& = &H_{F_1}(\hat{s}^1, (\hat{x}_1, \hat{x}_2, \ldots, \hat{x}_m)\!), \\
	H_{F_2}(u^1, (x_1, x_2, \ldots, x_m)\!)& = &H_{F_2}(\hat{u}^1, (\hat{x}_1, \hat{x}_2, \ldots, \hat{x}_m)\!), \\
			 & \vdots &\\
	H_{F_m}\!(v^1, (x_1, x_2, \ldots, x_m)\!)&=&H_{F_m}\!(\hat{v}^1, (\hat{x}_1, \hat{x}_2, \ldots, \hat{x}_m)\!).
	\IEEEstrut
	\end{IEEEeqnarraybox}
	\right.
	\label{eq11}
	\end{equation}	

    According to Eq.~\eqref{eq11}, 
	\begin{IEEEeqnarray*}{lcl}
		d_x((H_{F_1}(i(s), (x_1, x_2, \ldots, x_m)),&& \\ 		
		H_{F_2}(i(u),(x_1, x_2, \ldots, x_m)), \ldots,&&\\	
		H_{F_m}(i(v), (x_1, x_2, \ldots, x_m))),&&\\		
		(H_{F_1}(i(\hat{s}),(\hat{x}_1, \hat{x}_2, \ldots, \hat{x}_m)), &&\\
		H_{F_2}(i(\hat{u}), (\hat{x}_1, \hat{x}_2, \ldots, \hat{x}_m)), \ldots,&&\\			
			H_{F_m}(i(\hat{v}), (\hat{x}_1, \hat{x}_2, \ldots, \hat{x}_m))))&=&0 \,. 
		\IEEEyesnumber\label{eq12}
		\end{IEEEeqnarray*}

Substituting Eq.~\eqref{eq10} and Eq.~\eqref{eq12} into Eq.~\eqref{eq9}, we get
\begin{IEEEeqnarray}{r}
	\quad d({G_F}(E),{G_F}(\hat{E}))\le m/{2^{N({k_0}+1)}}<m/{2^{N{k_0}}}<\varepsilon \,,	\nonumber\\*	
\end{IEEEeqnarray}                           
which shows that ${G_F}$ is continuous.

\end{proof}
In summary, it can be seen that ${G_F}:\mathcal{E}\to \mathcal{E}$ is a continuous mapping in the metric space $(\mathcal{E},d)$, and it also means the metric space $(\mathcal{E},d)$ is a compact space.

\subsection{State transition diagram and strong connectivity}
Given a digital chaotic system, a state and its interval is mapped to another one. Considering the mapping relation as a directed edge (link), the state transition  diagram of the chaotic system can be build up. As shown in \cite{Hsu:CellMap:IJBC92,Shreim:NetworkCA:2007}, the associated state transition  diagram can demonstrate some dynamical properties of digital chaotic systems that cannot be observed by the previous analytic methods. For the state transition  diagram of $G_F$ in HDDCS, all the possible combinations of $(x_1,x_2,\ldots,x_m)$ are the states, and there is a directed edge from state $(\hat{x}_1,\hat{x}_2,\ldots,\hat{x}_m)$ to another state $(\tilde{x}_1,\tilde{x}_2,\ldots,\tilde{x}_m)$ if
\begin{equation}
    \label{eq14}
    \begin{IEEEeqnarraybox}[\IEEEeqnarraystrutmode\IEEEeqnarraystrutsizeadd{2pt}{2pt}][c]{rCl}
    &&(G_F((\hat{s}, \hat{u}, \ldots, \hat{v}), (\hat{x}_1, \hat{x}_2, \ldots, \hat{x}_m)))_{x_1, x_2, \ldots, x_m}\\
    &=&(\tilde{x}_1, \tilde{x}_2, \ldots, \tilde{x}_m)\,,
    \end{IEEEeqnarraybox}
    \end{equation}

\begin{definition}[\cite{Wang:hddcs:IEEECSI2016}]
	\label{strong connectivity}
    If each state in the state transition diagram can reach any other state through a directed edge, the state transition  diagram is strongly connected.
\end{definition}

\subsection{Topological mixing of $G_F$}

Topological mixing, Li Yorke and Devaney's chaos are three well known and common criterias of chaos in a discrete dynamical system. The relationship between them is as follows: topological mixing implies both Li Yorke and Devaney's chaos in compact spaces~\cite{Xiong:Chaos:ICTP1991,Zhou:Dynamics:1997,Li:Period:2004}, but  not vise versa.

\begin{theorem}
	\label{thm2}
If the state transition diagram of ${G_F}$ is strongly connected, then ${G_F}$ is topological mixing in the compact space $(\mathcal{E},d)$.
\end{theorem}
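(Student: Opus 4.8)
The plan is to verify topological mixing directly from its definition: for every pair of non-empty open sets $U,V\subseteq\mathcal{E}$ I must produce an index $N_0$ such that $G_F^n(U)\cap V\neq\emptyset$ for all $n\ge N_0$. Since every open set contains an open ball, it suffices to treat basic balls. First I would use Lemma~\ref{lemma1} together with the finite-precision structure of $d_x$ (two distinct real tuples are always at distance $\ge 2^{-Q}$) to describe a small ball around a center $E_a$ concretely: it consists of exactly those points whose real part equals $x^a=(x_1^a,\ldots,x_m^a)$ and whose sequences $s,u,\ldots,v$ agree with the center on their first $k_a$ entries, where $k_a$ is fixed by the radius. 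I would describe $V$ around a center $E_b$ by the analogous data $x^b$ and $k_b$.

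Next I would analyze the action of $G_F^n$. One application of $G_F$ shifts every random sequence one place to the left and updates $(x_1,\ldots,x_m)$ by the bitwise rule~\eqref{BVN3}, in which the overflow of each sequence decides, coordinate by coordinate, whether that coordinate is refreshed by $F_j$ or left unchanged. Thus over $n$ iterations the real part performs a length-$n$ walk in the state transition diagram whose step-$k$ transition is dictated by the $k$-th entries $(s^k,u^k,\ldots,v^k)$, while the sequence part becomes $(\sigma^n(s),\ldots,\sigma^n(v))$. The decisive freedom is that a point of $U$ is constrained only on its first $k_a$ sequence entries; all entries from index $k_a+1$ onward may be chosen at will. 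To land the image in $V$ I need two independent things: (i) the shifted sequences to match $E_b$ on their first $k_b$ entries, i.e. $s^{n+1}=s_b^1,\ldots,s^{n+k_b}=s_b^{k_b}$ and similarly for the others; and (ii) the real part after $n$ steps to equal $x^b$ exactly. Requirement (i) uses only indices $>n\ge k_a$, which are free, so it never conflicts with (ii), which is governed by indices $\le n$.

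The heart of the argument is requirement (ii): I must realize a walk of length \emph{exactly} $n$ from $x^a$ to $x^b$, and for \emph{every} large $n$. The forced first $k_a$ control entries drive $x^a$ deterministically to some intermediate state $x'$; from index $k_a+1$ on I am free, and by the hypothesis that the diagram is strongly connected there is a directed path $x'\to\cdots\to x^b$ of some length $\ell$, each edge being realizable by an appropriate choice of that step's overflow values (cf. Eq.~\eqref{eq14}). By itself this yields only transitivity, i.e. one suitable $n$. The extra ingredient that upgrades this to mixing is that the diagram always carries a self-loop at every state: choosing the overflow of every sequence to be $0$ makes each $H_{F_j}$ return $x_j$ unchanged, so $x^b\to x^b$ is an edge. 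Hence after reaching $x^b$ in $\ell$ steps I may idle with self-loops for the remaining $n-k_a-\ell$ steps, giving a walk of length exactly $n$ for every $n\ge k_a+\ell$; equivalently, strong connectivity plus a self-loop makes the diagram primitive, so walks of every length $\ge T$ join any ordered pair of states. Taking $N_0=k_a+\ell$ (or $k_a+T$) then exhibits, for each $n\ge N_0$, a point of $U$ whose $G_F^n$-image lies in $V$, establishing $G_F^n(U)\cap V\neq\emptyset$.

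I expect the main obstacle to be precisely this last point: strong connectivity alone delivers only the transitivity required for Devaney chaos, whereas topological mixing demands walks of \emph{all} sufficiently large lengths, i.e. aperiodicity of the diagram. The crux is therefore to observe and justify that the all-zero overflow always produces a self-loop at each state, so that a strongly connected diagram is in fact primitive; the remaining work — translating metric balls into finite-prefix constraints via Lemma~\ref{lemma1}, and checking that the prefix-matching (i) and the walk-realization (ii) occupy disjoint blocks of sequence indices — is routine bookkeeping.
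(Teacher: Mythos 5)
Your proposal is correct and takes essentially the same route as the paper's proof: both reduce open sets to prefix-cylinder balls via Lemma~\ref{lemma1} and the $2^{-Q}$ gap in $d_x$, let the forced prefix drive the state to an intermediate state, use strong connectivity to route it to the target state, and exploit the all-zero-control self-loop to pad the walk length — exactly the paper's blocks of $\check{s},\check{u},\ldots,\check{v}$ zeros. The only difference is the finish: the paper pads every walk to one uniform length $n_0=k_0+2^{Nm}$, proves $G_F^{n_0}(U)=\mathcal{E}$, and then extends to all $n\ge n_0$ via surjectivity of $G_F$, whereas you pad each walk to length exactly $n$ for every $n\ge N_0$ (primitivity), which makes the surjectivity step unnecessary.
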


\begin{proof}
Before the proof, first review the topological mixing of ${G_F}$ in the compact space $(\mathcal{E},d)$. The so-called topological mixing specifically refers to that, for the non-empty open set $U,V\subset \mathcal{E}$, there is a positive integer ${n_0}$, which satisfies $G_F^n(U)\cap V\ne \varnothing ,\forall n\geq{n_0}$.

Suppose that the non-empty open set $U$ is centered on $((s,u,\ldots ,v),(x_{1},x_2,\ldots ,x_m))$ with a radius $r$, notice the center point $((s,u,\ldots ,v),({x_{1}},{x_2},\ldots ,{x_m}))\in U\subset \mathcal{E}$ can be expressed as
\begin{equation}
	\begin{IEEEeqnarraybox}[\IEEEeqnarraystrutmode\IEEEeqnarraystrutsizeadd{2pt}{2pt}][c]{rCl}
	\IEEEeqnarraymulticol{3}{l}{
	((s, u, \ldots, v), (x_1, x_2, \ldots, x_m))}\\
	\quad & = & (((s^1s^2\cdots s^{k_0}\cdots s^n\cdots ), \\
	\quad & & (u^1u^2\cdots u^{k_0}\cdots u^n\cdots ), \ldots, \\
	\quad & & (v^1v^2\cdots v^{k_0}\cdots v^n\cdots )), \\
	\quad & & (x_1, x_2, \ldots, x_m))\,.
	\end{IEEEeqnarraybox}
	\end{equation}
First, we need to prove that for any point $((s'u',\ldots ,v'),(x{{'}_{1}},x{{'} _2},\ldots ,x{{'}_m}))\in \mathcal{E}$, we can find $((\tilde{s},\tilde{u},\ldots ,\tilde{v}),({\tilde{x}_{1}},{\tilde{x}_2},\ldots ,{\tilde{x}_m}))\in U\subset \mathcal{E}$   that can reach the point $((s'u ',\ldots ,v'),(x{{'}_{1}},x{{'}_2},\ldots ,x{{'}_m}))\in \mathcal{E}$ after ${n_0}={k_0}+{2^{Nm}}$-th iteration, where ${k_0}=\left\lfloor ({{\log }_2}m-{{\log }_2}r)/N \right\rfloor +1$, $N$ is the finite precision of ${G_F}$, $m$ is the dimension of ${G_F}$. The proof process is as follows:
\begin{enumerate}
\item Given the spherical radius $r$ of $U$ is less than ${2^{-Q}}$, if $({x_{1}},{x_2},\ldots ,{x_m})$ and $({\tilde {x}_{1}},{\tilde{x}_2},\ldots ,{\tilde{x}_m})$ do not coincide in the $m$-dimensional space, we can obtain $(\tilde{x}_1,\tilde{x}_2,\ldots,\tilde{x}_m)\neq(x_1,x_2,\ldots,x_m)$ such that
\begin{equation*}
\begin{IEEEeqnarraybox}[][c]{lll}
&&{d_x}(({\tilde{x}_{1}},{\tilde{x}_2},\ldots ,{\tilde{x}_m}),({x_{1}},{x_2},\ldots ,{x_m}))\\
&=& \sqrt{{{({\tilde{x}_{1}}-{x_{1}})}^2}+{{({\tilde{x}_2}-{x_2})}^2}+\cdots +{{({\tilde{x}_m}-{x_m})}^2}}\\
&\geq& {2^{-Q}}>r\,,
\IEEEstrut
\end{IEEEeqnarraybox}
\end{equation*}	
Then, one has $((\tilde{s},\tilde{u},\ldots ,\tilde{v}),({\tilde{x}_{1}},{\tilde{x}_2},\ldots ,{\tilde{x}_m}))\notin U$. Therefore, $({\tilde{x}_{1}},{\tilde{x}_2},\ldots ,{\tilde{x}_m})=({x_{1}},{x_2},\ldots ,{x_m})$ must first be satisfied.
\item According to Lemma~\ref{lemma1}, if the first ${k_0}$ elements of $\tilde{s},\tilde{u},\ldots ,\tilde{v}$ and $s,u,\ldots ,v$ are the same, then
${d_s}(s,\tilde{s})\le 1/{2^{N{k_0}}},{d_{u}}(u,\tilde{u})\le 1/{2^ {N{k_0}}},\ldots ,{d_{v}}(v,\tilde{v})\le 1/{2^{N{k_0}}}$
. So $\forall \ r<1$, an integer $k_0$ satisfying the relation 
$${d_s}(s,\tilde{s})+{d_{u}}(u,\tilde{u})+\cdots +{d_{v}}(v,\tilde{v})\le m/{2^{N{k_0}}}<r$$ can always be found.
\item If after the $k_0$-th iteration, equality
	\begin{equation}
		\label{eq16}
		\begin{IEEEeqnarraybox}[][c]{ll}
			&(G_F^{k_0}((\tilde{s}, \tilde{u}, \ldots, \tilde{v}), (\tilde{x}_1, \tilde{x}_2, \ldots, \tilde{x}_m)))_{x_1, x_2, \ldots, x_m}\\
			=&(x_1', x_2', \ldots, x_m')\,,
		\IEEEstrut
		\end{IEEEeqnarraybox}
		\end{equation}
exists, we know that $({\tilde{x}_{1}},{\tilde{x}_2},\ldots ,{\tilde{x}_m})=({x_{1}},{x_2} ,\ldots ,{x_m})$, the point is found in $U$ as	
\begin{equation}
	\begin{IEEEeqnarraybox}{rCl}
	&&	((\tilde{s}, \tilde{u}, \ldots, \tilde{v}), (\tilde{x}_1, \tilde{x}_2, \ldots, \tilde{x}_m))\\
	 & = & (((s^1s^2\cdots s^{k_0}
	\check{s}^{k_0+1}\check{s}^{k_0+2}\cdots\check{s}^{n_0}{s'}^1{s'}^2\cdots{s'}^n\cdots),\\
	 & & (u^1u^2\cdots u^{k_0}\check{u}^{k_0+1}\check{u}^{k_0+2}\cdots\check{u}^{n_0}{u'}^1{u'}^2\cdots{u'}^n\cdots),  \\
	 & & \ldots,\\
	 & & (v^1v^2\cdots v^{k_0}\check{v}^{k_0+1}\check{v}^{k_0+2}\cdots\check{v}^{n_0}{v'}^1{v'}^2\cdots{v'}^n\cdots)), \\
	 & & (\tilde{x}_1, \tilde{x}_2, \ldots, \tilde{x}_m))\\
	 &\in &U\,,
	\end{IEEEeqnarraybox}
	\end{equation}
where ${{\check{s}}^{{k_0}+1}}{{\check{s}}^{{k_0}+2}}\cdots {{\check{s}}^{{n_0}}}$, ${{\check{u}}^{{k_0}+1}}{{\check{u}}^{{k_0}+2}}\cdots {{\check{u}}^{{n_0}}}$, $\ldots$, ${{\check{v}}^{{k_0}+1}}{{\check{v}}^{{k_0}+2}}\cdots {{\check{v}}^{{n_0}}}$  are all 0s.
Note that the purpose of setting ${{\check{s}}^{{k_0}+1}}{{\check{s}}^{{k_0}+2}}\cdots {{\check{s}}^{{n_0}}}$, ${{\check{u}}^{{k_0}+1}}{{\check{u}}^{{k_0}+2}}\cdots {{\check{u}}^{{n_0}}}$, $\ldots$, ${{\check{v}}^{{k_0}+1}}{{\check{v}}^{{k_0}+2}}\cdots {{\check{v}}^{{n_0}}}$ all 0s is to ensure that the point $((\tilde{s},\tilde{u},\ldots ,\tilde{v}),({\tilde{x}_{1}},{\tilde{x}_2},\ldots ,{\tilde{x}_m}))\in U$ can reach any point $((s'u',\ldots ,v'),(x{{'}_{1}},x{{'}_2},\ldots ,x{{'}_m}))\in \mathcal{E}$ at the same time after ${n_0}={k_0}+{2^{Nm}}$-th iteration. By making the equation
	\begin{IEEEeqnarray*}{l}
		\label{eq18}
		G_F^{n_0}(\!(\!(s^1s^2\cdots s^{k_0}\check{s}^{k_0+1}\check{s}^{k_0+2}\cdots\check{s}^{n_0}{s'}^1{s'}^2\cdots {s'}^n\cdots), \\ 
		(u^1u^2\cdots u^{k_0}\check{u}^{k_0+1}\check{u}^{k_0+2}\cdots\check{u}^{n_0}{u'}^1{u'}^2\cdots {u'}^n\cdots),\\
		\ldots,\\		
		(v^1v^2\cdots  v^{k_0}\check{v}^{k_0+1}\check{v}^{k_0+2}\cdots\check{v}^{n_0}{v'}^1 {v'}^2\cdots{v'}^n\cdots)\!),  \\
		(\tilde{x}_1, \tilde{x}_2, \ldots, \tilde{x}_m)\!)\\	
		=((s', u', \ldots, v'), (x_1', x_2', \ldots, x_m'))\,,
		\IEEEyesnumber
		\end{IEEEeqnarray*}
hold, we certainly can find the point $((\tilde{s},\tilde{u},\ldots ,\tilde{v}),({\tilde{x}_{1}} ,{\tilde{x}_2},\ldots ,{\tilde{x}_m})$ in $U$, which can reach any point $( (s'u',\ldots ,v'),(x{{'}_{1}},x{{'}_2},\ldots ,x{{'}_m}))$ in $\mathcal{E}$ after ${n_0}$-th iteration.
\item If after ${k_0}$-th iteration, inequality 
\begin{IEEEeqnarray*}{ll}
	&(G_F^{k_0}((\tilde{s}, \tilde{u}, \ldots, \tilde{v}), (\tilde{x}_1, \tilde{x}_2, \ldots, \tilde{x}_m)))_{x_1, x_2, \ldots, x_m}\\
	\neq&(x_1', x_2', \ldots, x_m')\,,
\end{IEEEeqnarray*}
holds, set
\begin{IEEEeqnarray*}{ll}
	&(G_F^{k_0}((\tilde{s}, \tilde{u}, \ldots, \tilde{v}), (\tilde{x}_1, \tilde{x}_2, \ldots, \tilde{x}_m)))_{x_1, x_2, \ldots, x_m}\\
	=&(x_1'', x_2'', \ldots, x_m'')\,.
\end{IEEEeqnarray*}
It should be noted that the state  transition  diagram of ${G_F}$ is strongly connected, so there is at least one path between  $(x'{{'}_{1}},x'{{'}_2},\ldots ,x'{{' }_m})$ and $(x{{'}_{1}},x{{'}_2},\ldots ,x{{'}_m})$,  from ${k_0}+1$-th iteration and then iterate ${i_0}$ times
(${i_0}$ is equivalent to the number of edges traversed by the connected path between $(x'{{'}_{1}},x'{{'}_2},\ldots ,x'{{'}_m})$ and $(x{{ '}_{1}},x{{'}_2},\ldots ,x{{'}_m})$. 
Since the maximum number of states in the state  transition  diagram of ${G_F}$ is ${2^{Nm}}$, and the number of edges traversed by the longest path is ${2^{Nm}}-1$. Since $i_0 \leq {2^{Nm}}-1$, it satisfies ${k_0}+{i_0}<{n_0}$.
And its iterative process is controlled by ${{\hat{s}}^{{k_0}+j}},{{\hat{u}}^{{k_0}+j}},\ldots ,{{\hat {v}}^{{k_0}+j}}(j=1,2,\ldots ,{i_0})$, so that
\begin{equation}
	\begin{IEEEeqnarraybox}[][c]{ll}
		\IEEEstrut
	&G_F^{k_0+i_0}((\tilde{s}, \tilde{u}, \ldots, \tilde{v}), (\tilde{x}_1, \tilde{x}_2, \ldots, \tilde{x}_m))\\
	 =& ((s', u', \ldots, v'), (x_1', x_2', \ldots, x_m'))\,.
	 \IEEEstrut
\end{IEEEeqnarraybox}
\end{equation}
holds. Considering $({\tilde{x}_{1}},{\tilde{x}_2},\ldots ,{\tilde{x}_m})=({x_{1}},{x_2}, \ldots ,{x_m})$, the point is found in $U$ as
\begin{equation}
	\begin{IEEEeqnarraybox}[\IEEEeqnarraystrutmode\IEEEeqnarraystrutsizeadd{2pt}{2pt}][c]{Cl}
	&((\tilde{s}, \tilde{u}, \ldots, \tilde{v}), (\tilde{x}_1, \tilde{x}_2, \ldots, \tilde{x}_m))\\
	 = & (((s^1 s^2\cdots s^{k_0}\hat{s}^{k_0+1}\hat{s}^{k_0+2}\cdots \hat{s}^{k_0+i_0}{{\check{s}}^{{k_0}+{i_0}+1}} \\
	 &{{\check{s}}^{{k_0}+{i_0}+2}}\cdots {{\check{s}}^{{n_0}}}s'^1s'^2\cdots s'^n\cdots),  \\
	 &(u^1 u^2\cdots u^{k_0}\hat{u}^{k_0+1}\hat{u}^{k_0+2}\cdots \hat{u}^{k_0+i_0}{{\check{u}}^{{k_0}+{i_0}+1}}\\
	 &{{\check{u}}^{{k_0}+{i_0}+2}}\cdots {{\check{u}}^{{n_0}}} u'^1u'^2\cdots u'^n\cdots), \\
	 &\ldots, \\
	 &(v^1 v^2\cdots v^{k_0} \hat{v}^{k_0+1}\hat{v}^{k_0+2}\cdots \hat{v}^{k_0+i_0} {{\check{v}}^{{k_0}+{i_0}+1}}\\
	 &{{\check{v}}^{{k_0}+{i_0}+2}}\cdots {{\check{v}}^{{n_0}}}v'^1v'^2\cdots v'^n\cdots)),\\
	 &(\tilde{x}_1, \tilde{x}_2, \ldots, \tilde{x}_m))\\
	 \in & U\,, 
	\end{IEEEeqnarraybox}
	\end{equation}	
where ${{\check{s}}^{{k_0}+{i_0}+1}}{{\check{s}}^{{k_0}+{i_0}+2}}\cdots {{\check{s}}^{{n_0}}},{{\check{u}}^{{k_0}+{i_0}+1}}{{\check{u}}^{{k_0}+{i_0}+2}}\cdots \allowbreak {{\check{u}}^{{n_0}}},\ldots,{{\check{v}}^{{k_0}+{i_0}+1}}{{\check{v}}^{{k_0}+{i_0}+2}}\cdots {{\check{v}}^{{n_0}}}$ are all 0s.

Similarly, the purpose of setting ${{\check{s}}^{{k_0}+{i_0}+1}}{{\check{s}}^{{k_0}+{i_0}+2}}\cdots {{\check{s}}^{{n_0}}},{{\check{u}}^{{k_0}+{i_0}+1}}{{\check{u}}^{{k_0}+{i_0}+2}}\cdots {{\check{u}}^{{n_0}}},\ldots ,\allowbreak{{\check{v}}^{{k_0}+{i_0}+1}}{{\check{v}}^{{k_0}+{i_0}+2}}\cdots {{\check{v}}^{{n_0}}}$ all 0s is to ensure that the point $((\tilde{s},\tilde{u},\ldots ,\tilde{v}),({\tilde{x}_{1}},{\tilde{x}_2},\ldots ,{\tilde{x}_m}))\in U$ can reach any point $((s'u',\ldots ,v'),(x{{'}_{1}},x{{'}_2},\ldots ,x{{'}_m}))\in \mathcal{E}$ at the same time after ${n_0}$-th iteration, and 
\begin{IEEEeqnarray*}{ll}
	\label{eq21}
	&G_F^{n_0}(((s^1 s^2\cdots s^{k_0}\hat{s}^{k_0+1}\hat{s}^{k_0+2}\cdots \hat{s}^{k_0+i_0}{{\check{s}}^{{k_0}+{i_0}+1}} \\ 
	&{{\check{s}}^{{k_0}+{i_0}+2}}\cdots {{\check{s}}^{{n_0}}}s'^1s'^2\cdots s'^n\cdots),\\
	&(u^1 u^2\cdots u^{k_0}\hat{u}^{k_0+1}\hat{u}^{k_0+2}\cdots \hat{u}^{k_0+i_0}{{\check{u}}^{{k_0}+{i_0}+1}}\\		
	&{{\check{u}}^{{k_0}+{i_0}+2}}
	 \cdots {{\check{u}}^{{n_0}}} u'^1u'^2\cdots u'^n\cdots),   \\
	 &\ldots,\\		
	&(v^1 v^2\cdots v^{k_0} \hat{v}^{k_0+1}\hat{v}^{k_0+2}\cdots \hat{v}^{k_0+i_0}{{\check{v}}^{{k_0}+{i_0}+1}}\\
	&{{\check{v}}^{{k_0}+{i_0}+2}}\cdots {{\check{v}}^{{n_0}}}v'^1v'^2\cdots v'^n\cdots)),\\
	&(\tilde{x}_1, \tilde{x}_2, \ldots, \tilde{x}_m))\\
	=&((s', u', \ldots, v'), (x_1', x_2', \ldots, x_m'))\,. 
	\IEEEyesnumber
	\end{IEEEeqnarray*}

\end{enumerate}
According to Eq.\eqref{eq18} and \eqref{eq21}, and considering the arbitrariness of the point $((s'u',\ldots ,v'),(x{{'}_{1}},x{{'}_2},\ldots ,x{{'}_m}))$ in $\mathcal{E}$, we know
\begin{equation}
	\label{eq22}
G_F^{{n_0}}(U)=\mathcal{E}   
\end{equation}                                                  
holds.
On the basis of the above, ${G_F}$ is topologically mixing in the metric space $(\mathcal{E},d)$. The proof is as follows:

Although ${G_F}:\mathcal{E}\to \mathcal{E}$ is a many-to-one mapping from an infinite set to itself, the mapping ${G_F}:\mathcal{E }\to \mathcal{E}$ can be proved to be surjective. Because according to the definition of surjection, for every element  
\begin{multline*}
	(((s{{'}^{1}}s{{'}^2}\cdots ),(u{{'}^{1}}u{{'}^2}\cdots ),\ldots ,(v{{'}^{1}}v{{'}^2}\cdots )),\\(x{{'}_{1}},x{{'}_2}, \ldots ,x{{'}_m}))\in \mathcal{E}\,,
\end{multline*}
in the codomain of ${G_F}: \mathcal{E}\to \mathcal{E}$, there is at least one element in the domain of $G_F$, take
\begin{multline*}
(((0s{{'}^{1}}s{{'}^2}\cdots ),(0u{{'}^{1}}u{{'}^2}\cdots ),\ldots ,(0v{{'}^{1}}v{{'}^2}\cdots )),\\(x{{'}_{1}},x{{'}_2},\ldots ,x{{'}_m}))\in \mathcal{E}\,.
\end{multline*}
as an example, so the mapping ${G_F}:\mathcal{E}\to \mathcal{E}$ is surjective. Thus 
\begin{equation}
	\label{eq23}
G_F(\mathcal{E})=\mathcal{E}
\end{equation}  
is satisfied.    
According to Eq.\eqref{eq22}-\eqref{eq23},                                         
\begin{equation}
	\left\{
	\begin{IEEEeqnarraybox}[][c]{l}
	\IEEEstrut
	G_F^{n_0}(U)\! \!= \mathcal{E}\,, \\
	G_F^{n_0+1}(U) \!=\! G_F(G_F^{n_0}(U))\!=\!G_F(\mathcal{E})=\mathcal{E}\,, \\
	G_F^{n_0+2}(U)\!=\! G_F(G_F^{n_0+1}(U))\!=\!G_F(\mathcal{E})=\mathcal{E}\,, \\
			  \vdots \\
	G_F^n(U) \!=\! G_F(G_F^{n-1}(U))\!=\!\cdots\!=\!G_F(\mathcal{E})\!=\!\mathcal{E}(\forall n\geq{n_0})
	\IEEEstrut
	\end{IEEEeqnarraybox}
	\right.
	\label{eq24}
	\end{equation}
is obtained. Therefore, for any non-empty open sets $U,V\subset \mathcal{E}$, there is ${n_0}$, which always satisfies	
\begin{multline}
G_F^n(U)\cap V=\mathcal{E}\cap V=V\ne \varnothing (\forall n\geq{n_0})
\end{multline} 
This proves that ${G_F}$ is topologically mixing in the metric space $(\mathcal{E},d)$.                                
\end{proof}

\section{The loop-state contraction algorithm for constructing HDDCS}
\label{Construction of HDDCS}
In this section, the iterative function design is taken as the entry point. According to the selection method of the iterative function, the set of all possible iterative functions $S$ is obtained. Using the loop-state contraction algorithm, an iterative function $F\in S$ is constructed, so that the state transition diagram of $G_F$ is strongly connected, which ensures that the iterative equation $E^{k+1}=G_F( E^k)(k=0,1,2,\ldots)$ can meet the requirements of HDDCS.

\subsection{Selection method of iterative function $F$}
\label{The design of iterative function}
For a $m$-dimensional iterative function, we assume that $F_i(1\leqslant i\leqslant m)$ contains only $m$ items, and the $j(1\leqslant j\leqslant m)$-th item is an expression containing only one variable $x_j(1\leqslant j\leqslant m)$, and there are three possibilities for the expression of this variable $x_j(1\leqslant j\leqslant m)$. That is the original variable $x_j$, the inverse variable $\overline{x_j}$ or 0. The operators between them are bitwise AND ``$\cdot$", bitwise OR ``$+$" and bitwise exclusive OR ``$\oplus$". The three operators have the same precedence level and the order of operation is from left to right. According to the above selection method of iterative function, the set of all possible iterative functions $S$ is obtained.

For example, according to the above selection method of iterative function, a certain three-dimensional iterative function is obtained as
\begin{equation}
	\left\{
	\begin{IEEEeqnarraybox}[][c]{l}
	\IEEEstrut
	F_1(x_1, x_2, x_3)=  x_1  \oplus  x_2 + x_3 \,,\\
	F_2(x_1, x_2, x_3)= \overline{x_1} \cdot  x_3\,,\\
	F_3(x_1, x_2, x_3)=  \overline{x_1}  + \overline{x_2} \oplus  x_3\,.
	\IEEEstrut	
	\end{IEEEeqnarraybox} \right.
	\end{equation}
In the above function, the first item in $F_1(x_1, x_2, x_3)$ is $x_1$, the second item is $x_2$, and the third item is $x_3$. The operators between them are bitwise exclusive OR ``$\oplus$" and bitwise OR ``$+$". The first item in $F_2(x_1, x_2, x_3)$ is $\overline{x_1}$, the second item is 0, and the third item is $x_3$. The operator between them is bitwise AND ``$\cdot$". The first item in $F_3(x_1, x_2, x_3)$ is $\overline{x_1}$, the second item is $\overline{x_2}$, and the third item is $x_3$. The operators between them are the bitwise OR ``$+$" and the bitwise exclusive OR ``$\oplus$". The selection of the remaining iterative functions can be deduced by analogy.

\subsection{State transition table, state transition diagram and adjacency matrix}

For an iterative equation with precision $N$ and dimension $m$, the corresponding state number is $2^{mN}$. Given finite precision $N$ and dimension $m$, the corresponding state transition table, state transition diagram and adjacency matrix can be further obtained according to the iterative function.

For example, assume the finite precision $N=1(P=1,Q=0,P+Q=N)$ and the dimension $m=2$, first consider the iterative function  uncontrolled  by random  sequences~\cite{Wang:hddcs:IEEECSI2016}, as

\begin{equation}
\left\{
\begin{IEEEeqnarraybox}[][c]{l}
\IEEEstrut
	x_1^n=F_1(x_1^{n-1}, x_2^{n-1})=\overline{x_1^{n-1}}\oplus \overline{x_2^{n-1}}\,,\\
	x_2^n=F_2(x_1^{n-1}, x_2^{n-1})=\overline{x_1^{n-1}} + \overline{x_2^{n-1}}\,, 
\IEEEstrut	
\end{IEEEeqnarraybox} \right.
\label{2dexp}
\end{equation}	
According to Eq.\eqref{2dexp}, the corresponding state transition table, state transition diagram, and adjacency matrix can be obtained, as shown in Fig.~\ref{eq7state}. Note that the state transition table, state transition diagram, and adjacency matrix are different manifestations of the same thing. The other two forms can be deduced from one form, namely the adjacency matrix and the state transition diagram can be deduced from the state transition table, the state transition table and the state transition diagram can be deduced from the adjacency matrix, and the state transition diagram can be deduced separately from state transition table and adjacency matrix. For convenience, first obtain the state transition table according to the iterative equation, and then obtain the state transition diagram and adjacency matrix according to the state transition table.

\begin{figure}[!htb]
	\centering
	\includegraphics[width=0.6\figwidth]{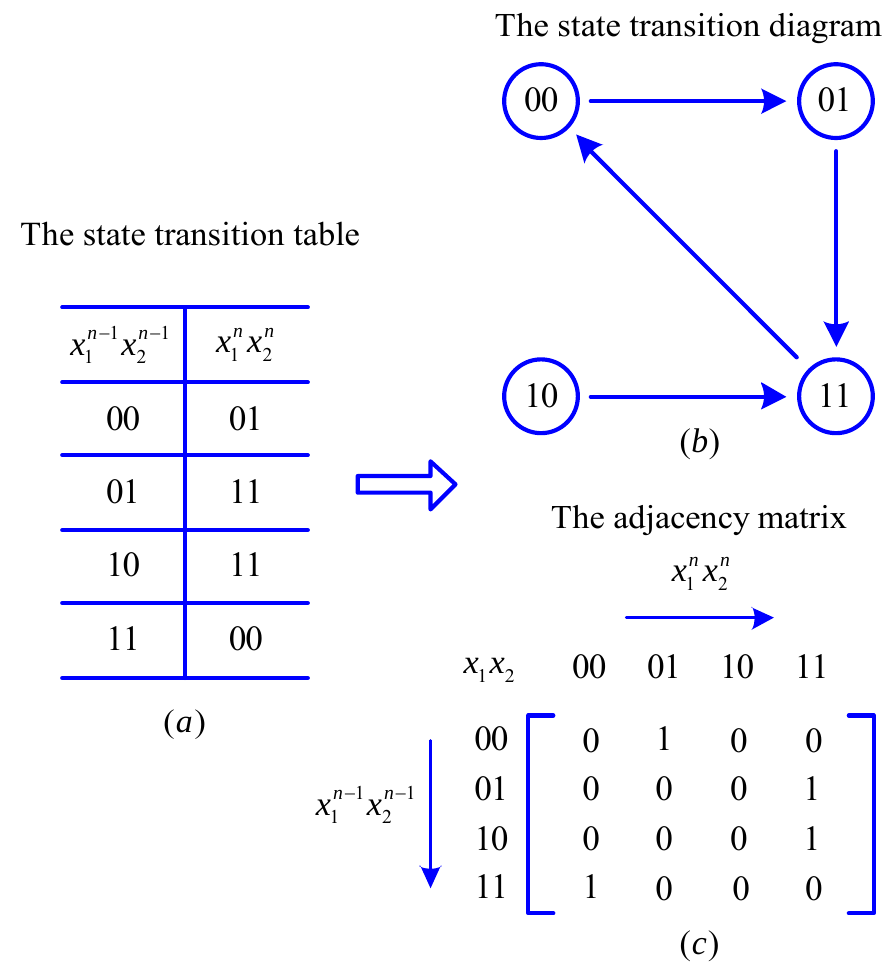}
	\caption{The corresponding state transition table, state transition diagram and adjacency matrix to Eq.~\eqref{2dexp}}
	\label{eq7state}
	\end{figure}
    
According to Eq.\eqref{2dexp}, the corresponding iterative function controlled by random sequences is
\begin{equation} \left\{
\begin{IEEEeqnarraybox}[][c]{l}
\IEEEstrut
x_1^n=x_1^{n-1}\cdot\overline{s^n}+((\overline{x_1^{n-1}}\oplus \overline{x_2^{n-1}})\cdot s^n)\,, \\
x_2^n=x_2^{n-1}\cdot\overline{u^n}+((\overline{x_1^{n-1}}+ \overline{x_2^{n-1}})\cdot u^n)\,, 
\IEEEstrut	
\end{IEEEeqnarraybox}\right.
\label{gfxy}
\end{equation}
where $s=s^1s^2s^3\cdots$ and $u=u^1u^2u^3\cdots$ are two random sequences. In the same way, the state transition table, state transition diagram, adjacency matrix, and standardized adjacency matrix corresponding to Eq.\eqref{gfxy} are obtained, as shown in Fig.~\ref{eq8state}.

\begin{figure}[!htb]
	\centering
	\includegraphics[width=\figwidth]{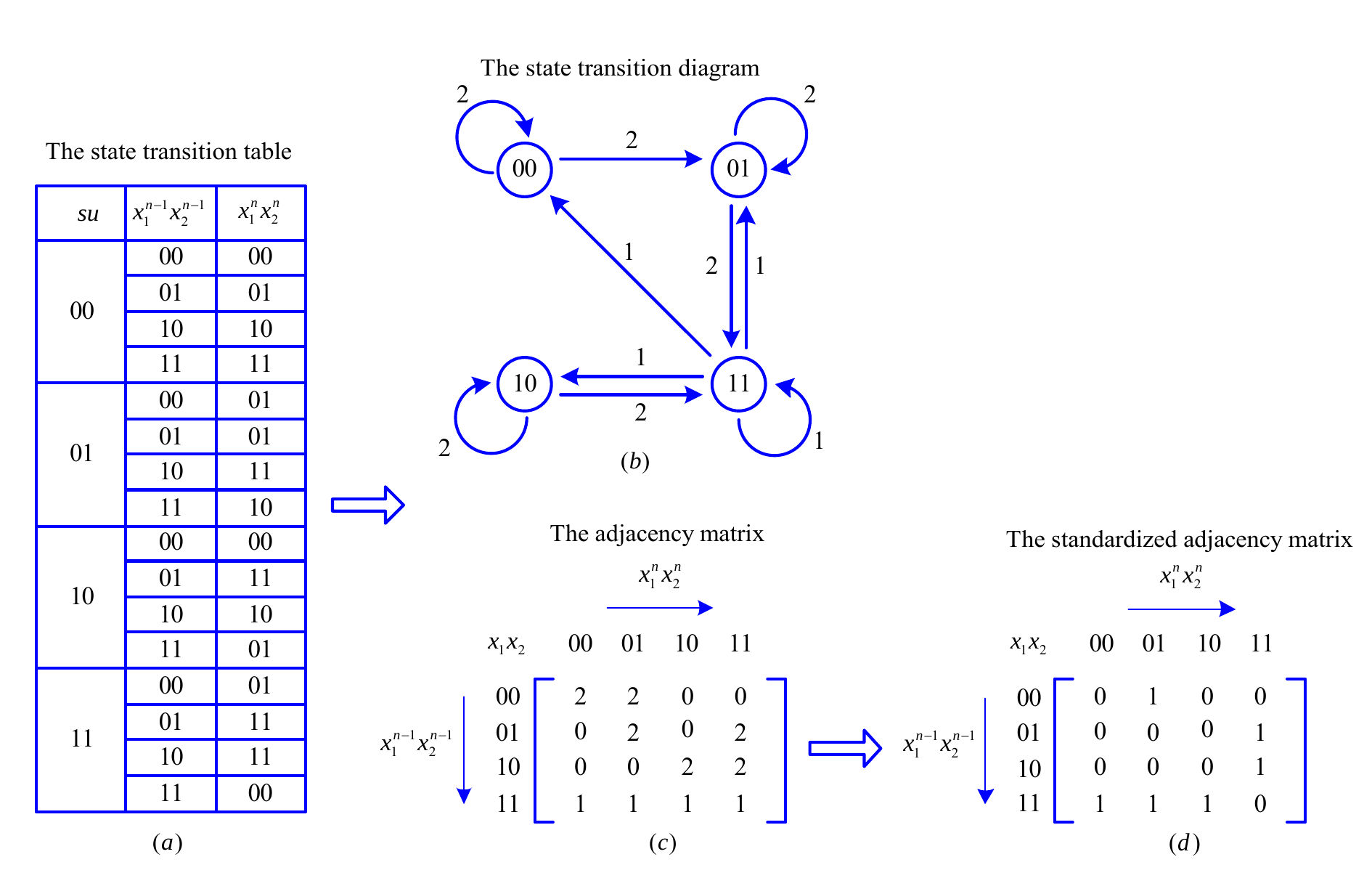}
	\caption{The corresponding state transition table, state transition diagram and adjacency matrix to Eq.~\eqref{gfxy}}
	\label{eq8state}
	\end{figure}

 It can be seen that the four numbers 2, 2, 2, and 1 on the main diagonal in the Fig.~\ref{eq8state}(c) respectively, which indicate that state 00 has 2 self-loops, state 01 has 2 self-loops, state 10 has 2 self-loops, and state 11 has 1 self-loop in Fig.~\ref{eq8state}(b). For the connectivity of the state transition diagram, the self-loops do not affect the strong connectivity, but they increase the complexity of the algorithm and may cause the program to fall into an endless loop. Therefore, all numbers on the main diagonal in the adjacency matrix should be set to 0. In addition, according to Fig.~\ref{eq8state}(b), there are two edges from the current state 00 to the next state 01, two edges from the current state 01 to the next state 11, and two edges from the current state 10 to the next state 11. For the connectivity of the state transition diagram, the situation that having multiple edges from the current state to the next state is equivalent to the situation that having only one edge from the current state to the next state, so all three numbers should be set to 1. Based on the above considerations, the corresponding standardized adjacency matrix is shown in Fig.~\ref{eq8state}(d). Hereinafter, the standardized adjacency matrix is abbreviated as the adjacency matrix.

\subsection{Three theorems on strong connectivity}
\begin{theorem}[\cite{Dharwadker:graphtheory2007}] 
	\label{thm3}
    If there is a loop in the state transition diagram that passes through each state at least once, it is strongly connected.
\end{theorem}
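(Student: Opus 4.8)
The plan is to argue straight from the two relevant definitions: the notion of a loop recalled in the introduction (a non-empty closed directed walk whose first and last states coincide) and the notion of strong connectivity in Definition~\ref{strong connectivity} (every state reaches every other state along directed edges). Accordingly, I would first fix notation by writing the hypothesized loop as a closed directed walk
\[
v_0 \to v_1 \to \cdots \to v_{\ell-1} \to v_\ell, \qquad v_\ell = v_0,
\]
and record the hypothesis in the form: for every state $w$ of the diagram there is an index $t$ with $0 \le t \le \ell-1$ and $v_t = w$.

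To verify strong connectivity it suffices, by Definition~\ref{strong connectivity}, to produce for an arbitrary ordered pair of states $(a,b)$ a directed path from $a$ to $b$. Using that the loop meets every state, I would pick indices $i$ and $j$ with $v_i = a$ and $v_j = b$. If $i \le j$, the consecutive segment $v_i \to v_{i+1} \to \cdots \to v_j$ of the loop is already a directed walk from $a$ to $b$. If instead $i > j$, I would wrap around the loop: the tail $v_i \to v_{i+1} \to \cdots \to v_\ell$ followed by the head $v_0 \to v_1 \to \cdots \to v_j$ is a directed walk from $a$ to $b$, where the splice is legitimate precisely because $v_\ell = v_0$.

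The final step is to pass from a directed walk to a directed path: any directed walk from $a$ to $b$ contains a directed path from $a$ to $b$, obtained by excising the closed sub-walks lying between repeated occurrences of a state. Since $a$ and $b$ were arbitrary, every state reaches every other state, which is exactly the condition in Definition~\ref{strong connectivity}, and the theorem follows.

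The argument is elementary, and I expect no serious obstacle; the only point demanding care is the case $i > j$, where the closed structure of the loop — rather than a merely spanning open walk — is essential, since it is the identity $v_\ell = v_0$ that lets me concatenate the tail and head segments into a single directed walk reaching $b$. The remainder is routine bookkeeping about the distinction between directed walks and directed paths.
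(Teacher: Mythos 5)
Your proof is correct, but there is nothing in the paper to compare it against: Theorem~\ref{thm3} is stated there without any proof at all, being imported by citation from \cite{Dharwadker:graphtheory2007}, and the paper proceeds directly to examples of its use. Your argument is the standard, self-contained one: index the loop as a closed walk $v_0 \to v_1 \to \cdots \to v_\ell$ with $v_\ell = v_0$, locate occurrences $v_i = a$ and $v_j = b$ for an arbitrary ordered pair of states, then either follow the consecutive segment (case $i \le j$) or wrap around through the splice point $v_\ell = v_0$ (case $i > j$), and finally shortcut the resulting directed walk to a directed path by excising closed subwalks between repeated states. All steps are sound, and you correctly flag the two points requiring care: the wrap-around case, which is precisely where closedness of the loop is essential, and the walk-to-path reduction, which squares your conclusion with the phrasing of Definition~\ref{strong connectivity}. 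In effect you have supplied the proof the paper omits rather than an alternative to one it contains.
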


According to Theorem~\ref{thm3}, it can be seen that there is no loop through each state at least once at the state transition diagram shown in Fig.~\ref{eq7state}(b), so it is not strongly connected. But for the state transition diagram shown in Fig.~\ref{eq8state}(b), there is a loop that passes through each state at least once, so it is strongly connected.

Note that when the state transition diagram is relatively simple, we can directly use it to determine whether it has strong connectivity, but when it is more complex, we only use the state transition diagram to determine its strong connectivity, which is often more difficult. In order to further solve this problem, in the following discussion, we will use the adjacency matrix to judge the strong connectivity of the state transition diagram.

\begin{theorem} [\cite{Khuller:sccnon1998}] 
	\label{thm4}
    Given a state transition diagram, find a loop, contract the loop, and then recurse. Only the strongly connected diagram will eventually contract to a single state.
\end{theorem}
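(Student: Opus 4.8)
The plan is to read the statement as a biconditional---loop contraction terminates in a single state if and only if the diagram is strongly connected---and to prove it by induction on the number of vertices, using as the engine a single invariance lemma: \emph{contracting a loop neither creates nor destroys strong connectivity}. Concretely, let $G$ be a state transition diagram, let $C$ be a loop (a directed cycle through at least two distinct states, since the standardized diagram carries no self-loops), and let $G'=G/C$ denote the diagram obtained by merging all states of $C$ into one supernode $c$ and deleting the resulting self-loops and multi-edges. First I would prove that $G$ is strongly connected if and only if $G'$ is strongly connected, and then feed this into the recursion.

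For the invariance lemma I would argue in both directions by projecting and lifting walks. If $G$ is strongly connected, take any two nodes of $G'$, lift them to representatives in $G$ (choosing any cycle vertex when the node is $c$), join the representatives by a directed path in $G$, and project that path into $G'$; edges internal to $C$ collapse to the deleted self-loops at $c$, so the projection is still a walk from one node to the other, and a walk yields a path---hence $G'$ is strongly connected. Conversely, suppose $G'$ is strongly connected and take any two states $a,b$ of $G$. Project them to $a',b'$, take a directed path from $a'$ to $b'$ in $G'$, and lift each edge back to a representative edge of $G$; whenever the lifted path enters and later leaves the supernode $c$, I bridge the entry and exit vertices of $C$ by traversing the appropriate arc of $C$ (this is exactly where the fact that $C$ is a cycle, so its vertices are mutually reachable, gets used). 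Concatenating the lifted edges with these cycle arcs produces a walk from $a$ to $b$ in $G$, so $G$ is strongly connected.

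With the lemma in hand the theorem follows by induction on $|V(G)|$. The base case is a single state, which is vacuously strongly connected. For the inductive step, if $G$ is strongly connected with more than one vertex, then every vertex has out-degree at least one, so $G$ contains a directed cycle $C$ available for contraction; since $C$ has at least two vertices, $|V(G/C)|<|V(G)|$, and by the lemma $G/C$ remains strongly connected, so the induction hypothesis drives the recursion down to a single state. For the ``only'' direction, run the contraction sequence $G=G_0\to G_1\to\cdots\to G_k$ backward from the terminal single state $G_k$, which is strongly connected, and apply the lemma repeatedly: if $G_{i+1}=G_i/C_i$ is strongly connected then so is $G_i$, whence $G=G_0$ is strongly connected.

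The main obstacle is the reverse half of the invariance lemma, namely the lifting step: projecting a walk downward onto $G'$ is automatic, but lifting a path that passes through the contracted supernode requires reassembling a legitimate walk in $G$, and this succeeds only because the contracted object is a cycle. I would also take care to phrase the lemma for the standardized diagram (self-loops deleted, multi-edges merged), since that is what guarantees each contraction strictly decreases the vertex count and therefore that the recursion terminates.
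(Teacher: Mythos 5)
Your proof is correct, but there is nothing in the paper to compare it against: Theorem~4 is stated with an attribution to Khuller's work on approximation algorithms for highly connected subgraphs, and the paper offers no proof of its own---the surrounding text only notes that the result was first proposed by Khuller et al.\ and that it will be used to judge strong connectivity of adjacency matrices. What you supply is the standard argument behind the cited result, and it is sound. Your invariance lemma (contracting a directed cycle $C$ preserves strong connectivity in both directions, proved by projecting walks down to $G/C$ and lifting paths back up with bridging along arcs of $C$ inside the supernode) is exactly the load-bearing step, and your induction wraps it up correctly: in a strongly connected, self-loop-free diagram on at least two vertices, following out-edges must revisit a vertex and, because consecutive vertices differ, yields a cycle on at least two vertices, so each contraction strictly decreases the vertex count while preserving strong connectivity; conversely, running the contraction sequence backward from the terminal single state gives the ``only'' half. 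The two points you single out as delicate are the right ones: the lifting step is where cycle-ness (rather than arbitrary-set-ness) of the contracted vertices is used, and the standardization convention (self-loops deleted) is what guarantees termination. So your proposal does not diverge from the paper; it fills in a proof the paper delegates entirely to the literature.
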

Theorem~\ref{thm4} was first proposed by Khuller et al. The goal of their algorithm proposed in \cite{Khuller:sccnon1998} is to find the minimum equivalent graph. In this article, we will judge the strong connectivity of the state transition diagram corresponding to the adjacency matrix according to the Theorem~\ref{thm4}.

\begin{theorem} 
	\label{thm5}
    When the finite precision $N=1$, if the operators between any two items of the iterative function $F$ are all bitwise operators, and the corresponding state transition diagram of ${G_F}$ is strongly connected, then when the finite precision $N>1$, the state transition diagram of ${G_F}$ corresponding to iterative function $F$ is still strongly connected.
\end{theorem}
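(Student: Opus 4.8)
The plan is to exploit the fact that every ingredient of the iterative function acts independently on each binary digit, so that the $N$-bit system splits into $N$ decoupled copies of the $N=1$ system. Write each coordinate of a state $X=(x_1,\ldots,x_m)$ in binary and let $X^{(\ell)}=(x_1^{(\ell)},\ldots,x_m^{(\ell)})\in\{0,1\}^m$ denote the $\ell$-th bit-slice, $\ell=1,\ldots,N$. First I would record the elementary observation that, since each item of $F_i$ is one of $x_j$, $\overline{x_j}$, or $0$ and the operators joining them are the bitwise $\cdot$, $+$, $\oplus$, the function $F_i$ is evaluated digit by digit; that is, $\bigl(F_i(X)\bigr)^{(\ell)}=F_i\bigl(X^{(\ell)}\bigr)$ for every bit position $\ell$. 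The same is then true of each update map $H_{F_j}$ in Eq.\eqref{BVN3}, because the AND/OR masking there is also bitwise.

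Second, I would examine a single application of $G_F$ at the bit-slice level. The control acting on the $j$-th coordinate is the overflow $i(w_j)$, which for $N>1$ is an arbitrary $N$-bit word, so its bits $i(w_j)^{(\ell)}$ may be chosen freely and independently across $\ell$. Using the digit-by-digit identity above, the $\ell$-th slice of the updated coordinate equals $x_j^{(\ell)}\cdot\overline{i(w_j)^{(\ell)}}+F_j\bigl(X^{(\ell)}\bigr)\cdot i(w_j)^{(\ell)}$, which is precisely the $N=1$ update applied to $X^{(\ell)}$ with control bit $i(w_j)^{(\ell)}$. Consequently one step of $G_F$ on the $N$-bit state is nothing but $N$ simultaneous steps of the base ($N=1$) system, one on each slice, and the slices can be steered by independent controls. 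In particular, taking all control words equal to $0$ leaves $X$ fixed, so in both the base and the $N$-bit state transition diagrams every state carries a self-loop; this is the ``hold'' move I will use for routing.

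Finally I would assemble a path. Given any source $X$ and target $X'$ in the $N$-bit diagram, strong connectivity of the $N=1$ diagram (the hypothesis, in the sense of Definition~\ref{strong connectivity}) supplies, for each slice $\ell$, a directed base-path from $X^{(\ell)}$ to $X'^{(\ell)}$. I would realize these one slice at a time: drive slice $1$ along its base-path while every other slice executes self-loops, then drive slice $2$ while the rest---including slice $1$, now parked at its target---hold, and so on through slice $N$. Since the per-slice controls are independent, each composite move is realized by a single choice of control words and is therefore a legal edge of the $N$-bit diagram, and the concatenation is a directed path from $X$ to $X'$; as $X,X'$ were arbitrary, the $N$-bit diagram is strongly connected for every $N>1$. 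The step that really needs care---and which I regard as the main obstacle---is the second: one must verify that $i(w_j)$ genuinely furnishes an \emph{independent} mask bit for each digit rather than a single bit shared across all digits, for it is exactly this independence, together with the identity $\bigl(F_i(X)\bigr)^{(\ell)}=F_i\bigl(X^{(\ell)}\bigr)$, that decouples the $N$ slices and lets the base connectivity lift slice by slice.
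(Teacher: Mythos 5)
Your proposal is correct and is essentially the paper's own argument: both proofs decompose the $N$-bit state into bit-slices, use the bitwise nature of $F$ and of the masking in Eq.~\eqref{BVN3} so that each slice evolves as an independent copy of the $N=1$ system, and exploit the freedom to set individual bits of the control masks $i(w_j)$ to $0$ in order to freeze some slices while routing others, slice by slice, along paths guaranteed by strong connectivity at $N=1$. The only cosmetic difference is that you hold the not-yet-routed slices in place with self-loops, whereas the paper lets the higher bits drift while the lowest bits are being routed and then steers them from wherever they land.
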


\begin{proof}
Use ${F_1},{F_2},\ldots {F_m}$ to represent the $m$-dimensional iterative function~\cite{Wang:hddcs:IEEECSI2016}, and the binary form of each iteration value is 
\begin{equation}
	\left\{
	\begin{IEEEeqnarraybox}[][c]{ccccccccccc}
		\label{eq3.4}
	\IEEEstrut
	x_1 	& = 	& x_{1,P-1} & x_{1,P-2} & \cdots	& x_{1,0} 	&. 	& x_{1,-1} 	& x_{1,-2} 	&\cdots	& x_{1,-Q}\,,\\
	x_2 	& = 	& x_{2,P-1} & x_{2,P-2} & \cdots	& x_{2,0} 	&. 	& x_{2,-1} 	& x_{2,-2} 	&\cdots & x_{2,-Q}\,,\\
		&\vdots	&			&			&		 	&		&	&			&			&		&\\
	x_m 	& = 	& x_{m,P-1} & x_{m,P-2} & \cdots	& x_{m,0} 	&. 	& x_{m,-1} 	& x_{m,-2} 	&\cdots & x_{m,-Q}\,,
	\IEEEstrut
	\end{IEEEeqnarraybox} \right.
	\end{equation}
where $P+Q=N$.

According to Eq.\eqref{eq3.4}, set ${x_{1,i}}{x_{2,i}}\cdots {x_{m,i}}$ is composed of $${x_{1,i}},{x_{2,i}},\ldots,{x_{m,i} }(i=P-1,P-2,\ldots,0,-1,\ldots,-Q),$$ if the state transition diagram corresponding to one of the states $${x_{1,j}}{x_{2,j}}\cdots {x_{m,j}}(j\in \{P-1,P-2,\ldots ,0,-1,\ldots ,-Q\})$$ is strongly connected, then the corresponding state transition diagram for the rest of the state $${x_{1,k}}{x_{2,k}}\!\cdots\! {x_{m,k}}(k\!=\!P\!-\!1,P\!-\!2,\!\ldots\! ,0,\!-\!1,\!\ldots \!,-Q;k\!\ne\!j)$$ must also be strongly connected because of the bitwise operations.

Let's further prove that if the state transition diagram corresponding to one of the states $$x_{1,j}x_{2,j}\cdots x_{m,j}(j\in \{P-1,P-2,\ldots ,0,-1,\ldots ,-Q\})$$ is strongly connected, then the corresponding state transition diagram for $x_{1,P\!-\!1}\allowbreak x_{1,P\!-\!2}\cdots \allowbreak x_{1,-Q}x_{2,P\!-\!1}\allowbreak x_{2,P\!-\!2}\cdots \allowbreak x_{2,-Q}\cdots \allowbreak x_{m,P\!-\!1} \allowbreak x_{m,P\!-\!2}\cdots \allowbreak x_{m,-Q}$ is also strongly connected. The proof is as follows:
\begin{enumerate}
\item Let us consider the case of finite precision $N=1(P=1,Q=0)$ and suppose the state transition diagram corresponding to ${G_F}$ is strongly connected, that is, there is at least one path between any two states ${\hat{x}_{1,0}}{\hat{x}_{2,0}}\cdots {\hat{x}_{m,0}}$ and ${\tilde{x}_{1,0}}{\tilde{x}_{2,0}}\cdots {\tilde{x}_{m,0}}$.

\item For $N=2(P=2,Q=0)$, there is at least one path between any two the lowest bit states ${\hat{x}_{1,0}}{\hat{x}_{2,0}}\cdots {\hat{x}_{m,0}}$ and ${\tilde{x}_{1,0}}{\tilde{x}_{2,0}}\cdots {\tilde{x}_{m,0}}$,  ${k_0}$ is set to equal to the number of edges in the connected path between the state ${\hat{x}_{1,0}}{\hat{x}_{2,0}}\cdots {\hat{x}_{m,0}}$ and the state ${\tilde{x}_{1,0}}{\tilde{x}_{2,0}}\cdots {\tilde{x}_{m,0}}$. It means
\begin{equation}
	\label{Gfk5}
	\begin{IEEEeqnarraybox}[][c]{ll}
	\IEEEstrut
	&(G_F^{k_0}((\hat{s}, \hat{u}, \ldots, \hat{v}), \\
	&(\hat{x}_{1,1} \hat{x}_{1,0}, \hat{x}_{2,1}\hat{x}_{2,0},\ldots,\hat{x}_{m,1}\hat{x}_{m,0})))_{x_1,x_2,\ldots ,x_m}\\
	 = & (x'_{1,1} \tilde{x}_{1,0}, x'_{2,1}\tilde{x}_{2,0},\ldots,x'_{m,1}\tilde{x}_{m,0})\,.
\IEEEstrut
\end{IEEEeqnarraybox} 
\end{equation}
If one gets
\begin{equation*}
	\begin{IEEEeqnarraybox}[][c]{ll}
	\IEEEstrut
&\left( x{{'}_{1,1}}{\tilde{x}_{1,0}},x{{'}_{2,1}}{\tilde{x}_{2,0}},\ldots ,x{{'}_{m,1}}{\tilde{x}_{m,0}} \right)\\
=&\left( {\tilde{x}_{1,1}}{\tilde{x}_{1,0}},{\tilde{x}_{2,1}}{\tilde{x}_{2,0}},\ldots ,{\tilde{x}_{m,1}}{\tilde{x}_{m,0}} \right)\,,
\IEEEstrut
\end{IEEEeqnarraybox} 
\end{equation*}
it shows that any two states are connected by ${k_0}$ directed edges, and the state transition diagram of ${G_F}$ is strongly connected from Definition~\ref{strong connectivity}. If inequality
\begin{equation*}
	\begin{IEEEeqnarraybox}[][c]{ll}
	\IEEEstrut
	&\left( x{{'}_{1,1}}{\tilde{x}_{1,0}},x{{'}_{2,1}}{\tilde{x}_{2,0}},\ldots ,x{{'}_{m,1}}{\tilde{x}_{m,0}} \right)\\
	\neq&\left( {\tilde{x}_{1,1}}{\tilde{x}_{1,0}},{\tilde{x}_{2,1}}{\tilde{x}_{2,0}},\ldots ,{\tilde{x}_{m,1}}{\tilde{x}_{m,0}} \right)\,,
	\IEEEstrut
\end{IEEEeqnarraybox} 
\end{equation*}
exists, the connected path between between $ x{{'}_{1,1}}{\tilde{x}_{1,0}},x{{'}_{2,1}}{\tilde{x}_{2,0}},\ldots ,x{{'}_{m,1}}{\tilde{x}_{m,0}} $ and ${\tilde{x}_{1,1}}{\tilde{x}_{1,0}},{\tilde{x}_{2,1}}{\tilde{x}_{2,0}},\ldots ,{\tilde{x}_{m,1}}{\tilde{x}_{m,0}}$ should be considered. Note that ${G_F}$ is controlled by random sequences, so the lowest bits of all random numbers in $m$ random sequences can be set to 0, so that the lowest bits ${\tilde{x}_{1,0}},{\tilde{x}_{2,0}},\ldots ,{\tilde{x}_{m,0}}$ remain unchanged for the following iterations, and only the highest bits $x{{'}_ {1,1}},x{{'}_{2,1}},\ldots ,x{{'}_{m,1}}$ continue to participate in the update operation. And because the state formed by the highest bit is ${x_{1,1}}{x_{2,1}}\cdots {x_{m,1}}$, its corresponding state transition diagram is also strongly connected because of the bitwise operations, there is at least one path between $x{{'}_{1,1}}x{{'}_{2,1}}\cdots x{{'}_{m,1}}$ and ${\tilde{x}_ {1,1}}{\tilde{x}_{2,1}}\cdots {\tilde{x}_{m,1}}$, after another iteration of $i_0$ times,
where $i_0$ is equal to the number of edges in the connected path between the state $x{{'}_{1,1}}x{{'}_{2,1}}\cdots x{{'}_{m,1}}$ and $ {\tilde{x}_{1,1}}{\tilde{x}_{2,1}}\cdots {\tilde{x}_{m,1}}$. Equality
\begin{equation}
		\begin{IEEEeqnarraybox}[][c]{ll}
		\IEEEstrut
		&(G_F^{k_0+i_0}((\hat{s}, \hat{u}, \ldots, \hat{v}),\\ &(\hat{x}_{1,1} \hat{x}_{1,0}, \hat{x}_{2,1}\hat{x}_{2,0},\ldots,\hat{x}_{m,1}\hat{x}_{m,0})))_{x_1,x_2,\ldots ,x_m}
		 \\=&  (\tilde{x}_{1,1} \tilde{x}_{1,0},\tilde{x}_{2,1}\tilde{x}_{2,0},\ldots,\tilde{x}_{m,1}\tilde{x}_{m,0})
	\IEEEstrut
\end{IEEEeqnarraybox} 
\label{Gfk6}
\end{equation}
holds, which shows that any two states ${\hat{x}_{1,1}}{\hat{x}_{1,0}}{\hat{x}_{2,1}}{\hat{x}_{2,0}}\cdots {\hat{x}_{m,1}}{\hat{x}_{m,0}}$ and ${\tilde{x}_{1,1}}{\tilde{x}_{1,0}}{\tilde{x}_{2,1}}{\tilde{x}_{2,0}}\cdots {\tilde{x}_{m,1}}{\tilde{x}_{m,0}}$ are connected, and the state transition diagram of ${G_F}$ is strongly connected from Definition~\ref{strong connectivity}.

\item Consider the case of finite precision $N>2$, the current state gradually approaches the target state from the lowest bit (i.e. $-Q$ bit) to the highest bit (i.e. $P-1$ bit) through the control of some random sequences, and finally the path between any two states is found, thereby obtaining the conclusion that the state transition diagram of ${G_F}$ corresponding to function $F$ is still strongly connected.
\end{enumerate}	
\end{proof}
According to Theorem~\ref{thm5}, we do not need to use finite precision $N>1$ to analyze the strong connectivity of the state transition diagram of  ${G_F}$ corresponding to the iterative function $F$, and $N>1$ makes the corresponding number of states ${2^{mN}}\gg{2^m}$ and greatly increases the complexity of the algorithm. We only need to use finite precision $N=1$ to analyze the strong connectivity of the state transition diagram of  ${G_F}$ corresponding to the iterative function $F$, it makes the corresponding number of states ${2^{mN}}={2^m}$ the smallest, and also can greatly reduce the complexity of the algorithm.

\subsection{Loop-state  contraction algorithm}

According to Theorem \ref{thm3}-\ref{thm5}, several criteria for determining whether the state transition diagram is strongly connected and the contraction algorithm of the adjacency matrix are further given, and finally the flow chart of the loop-state contraction algorithm is obtained.

\begin{Judgment criteria}
The necessary condition for the  state transition diagram corresponding to the uncontracted adjacency matrix to be strongly connected is that there is at least one 1 in each row and each column. Otherwise, it is not strongly connected. Conversely, if there is at least one 1 in each row and each column of the adjacency matrix before contraction, it cannot guarantee that the corresponding state transition diagram is strongly connected.
\end{Judgment criteria}

The description of Judgment criteria 1 is as follows: it is necessary to note that there is a row of all 0s in the adjacency matrix, indicating that there is no directed edge from the state corresponding to this row in the state transition diagram. There is a column of all 0s in the adjacency matrix, indicating that there is no directed edge that reaches the state corresponding to this column in the state transition diagram.

For example, there is the third row of all 0s in the adjacency matrix as shown in Fig.~\ref{2ddcsstate3}(a), and the state corresponding to the third row is 10. So, it can be seen that there is no directed edge from state 10 in the state transition diagram in Fig.~\ref{2ddcsstate3}(b). Therefore, the  diagram is not strongly connected in Fig.~\ref{2ddcsstate3}(b). According to Fig.~\ref{2ddcsstate3}(c), there is the third column of all 0s in the adjacency matrix, and the state corresponding to the third column is 10, so it can be seen that there is no directed edges to state 10 in the state transition diagram in Fig.~\ref{2ddcsstate3}(d). Therefore, the state transition diagram shown in Fig.~\ref{2ddcsstate3}(d) is not strongly connected.

\begin{figure}[!htb]
	\centering
	\includegraphics[width=0.8\figwidth]{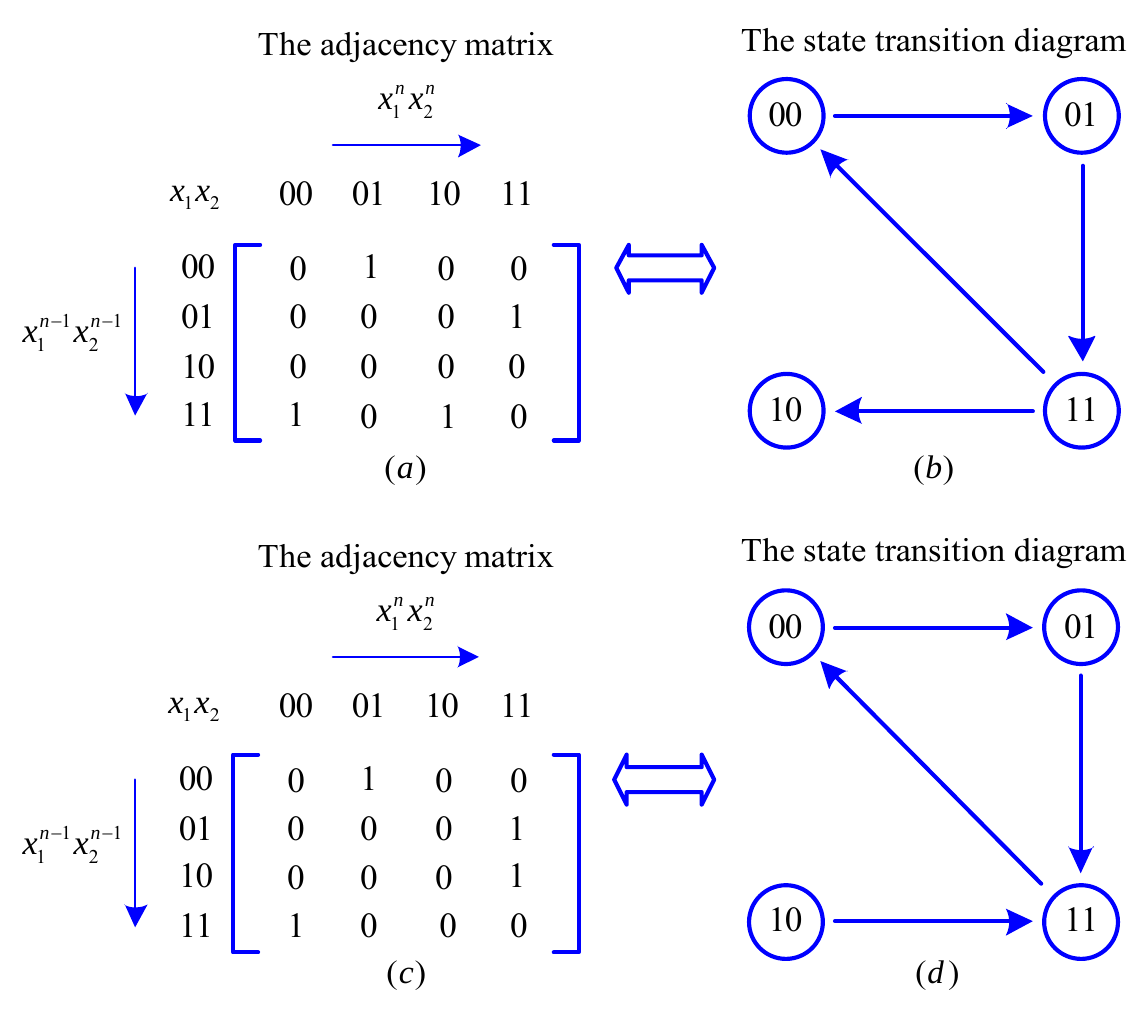}
	\caption{The adjacency matrix with the third row of all 0s and the third column of all 0s}
	\label{2ddcsstate3}
	\end{figure}
In addition,  the order of the uncontracted adjacency matrix satisfies $O={2^m}(m=2,3,4,\ldots )$ in Judgment criteria 1. Note that for an adjacency matrix with order ${2^m}(m=2,3,4,\ldots )$, even if there is at least one 1 in each row and each column of the adjacency matrix before contraction, there is no guarantee that the corresponding state transition diagram is strongly connected. 

\begin{Judgment criteria}
If the order of the contracted adjacency matrix satisfies $1<O<4$ and also satisfies that there is at least one 1 in each row and each column of the matrix, the corresponding state transition diagram is strongly connected. If the order of the contracted adjacency matrix satisfies $O=1$, the state transition diagram is strongly connected. Otherwise, it is not strongly connected.
\end{Judgment criteria}
In Judgment criterion 2, if the order of the contracted adjacency matrix  $O=1$, according to Theorem~\ref{thm4}, the state transition diagram is strongly connected. If the order of the contracted adjacency matrix  $O=2$, when it is satisfied that there is at least one 1 in each row and each column of the matrix, the corresponding state transition diagram must be strongly connected. If the order of the contracted adjacency matrix is $O=3$, when there is at least one 1 in each row and each column of the matrix, the corresponding state transition diagram is also strongly connected.

\begin{Contraction algorithm for the adjacency matrix} 
The adjacency matrix is contracted once, and the contracted block matrix $B$ is 
\begin{equation}
B=\left[ \begin{matrix}
   0 & P  \\
   Q & R  \\
\end{matrix} \right]                                          
\end{equation}
where the order of the block matrix $B$ is $O=k+1$, and $k$ represents the total number of the states not in the loop. The $P$ in the block matrix $B$ is a $1\times k$ matrix. After the contraction,  the values in the matrix $P$ are the column sums (logical or) of the elements whose row number is in the loop and the column number is not in the loop. The $Q$ in the block matrix $B$ is a $k\times 1$ matrix, and the values in $Q$ are the row sums  (logical or) of the elements whose column number is in the loop and the row number is not in the loop. The $R$ in the block matrix $B$ is a $k\times k$ matrix, and the values in $R$ are the original values of the elements whose row number and column number are not in the loop.
\end{Contraction algorithm for the adjacency matrix} 

According to Judgment criterion 1-2 and the contraction algorithm for the adjacency matrix, the flow chart of the loop-state contraction algorithm is shown in Fig.~\ref{loop7}.

	\begin{figure}[!htb]
		\centering
		\includegraphics[width=\figwidth]{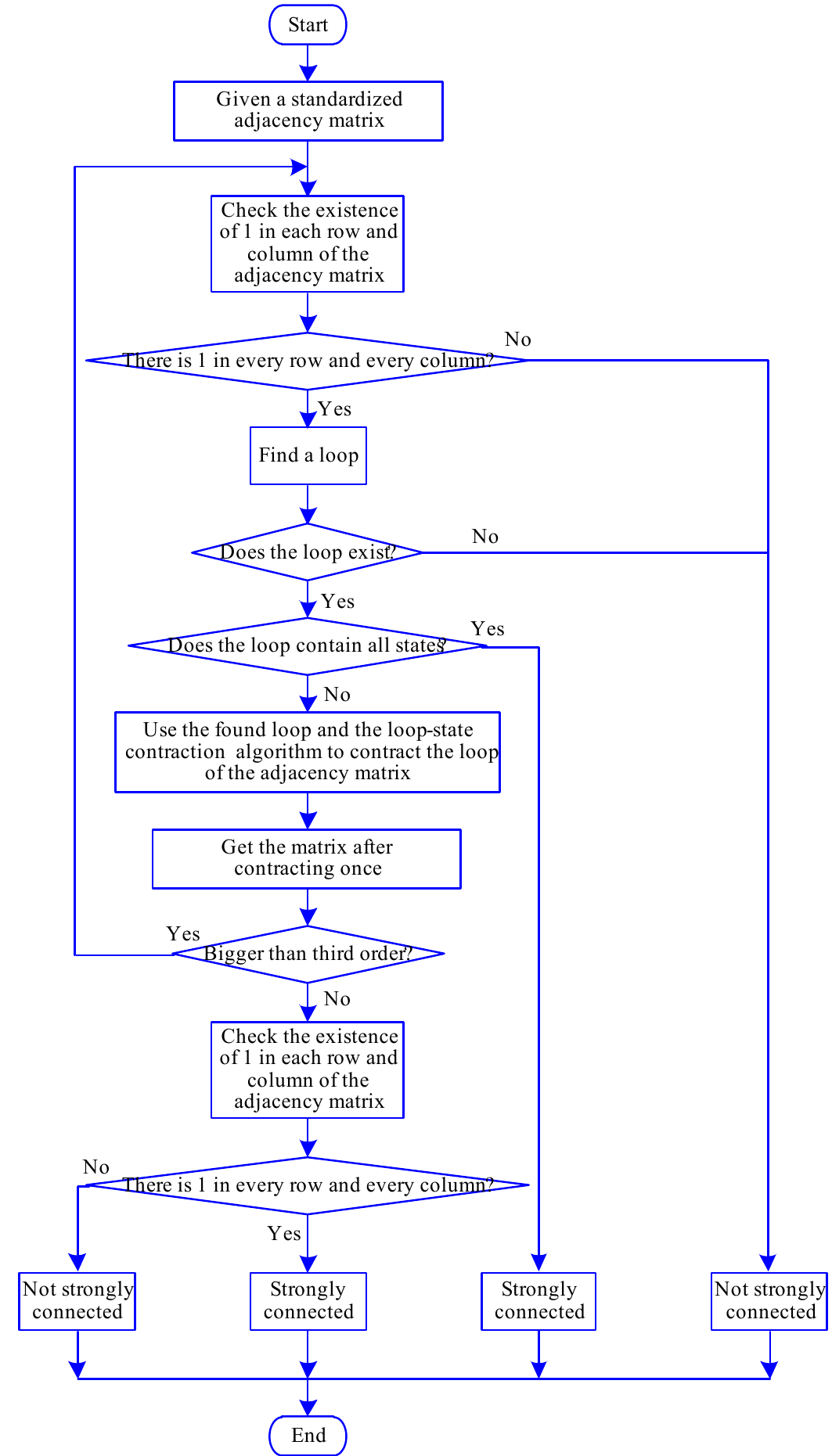}
		\caption{The flow chart of the loop-state contraction algorithm}
		\label{loop7}
	\end{figure}

In Fig.~\ref{loop7}, for a given uncontracted standardized adjacency matrix, first according to Judgment criterion 1, if there is a row or a column of all 0s in the adjacency matrix, the state transition diagram corresponding to the adjacency matrix is not strongly connected. If there is  at least one 1 in each row and each column, the adjacency matrix is contracted according to Contraction algorithm for the adjacency matrix until the order of the contracted adjacency matrix satisfies $O<4$. Then according to Judgment criterion 2, further check the strong connectivity of the adjacency matrix whose order satisfies $O<4$. In addition, in the Contraction algorithm for the adjacency matrix, a depth-first search (DFS) algorithm is used to find a loop. Specifically, the process of finding a loop is to start from a certain state and arrive at the next adjacent state that has not been visited in sequence, so as to explore as deeply as possible until a loop with a state number greater than two is found. If no loop is found with a state number greater than 2, then go back to the previous state, select another state that has not been visited as the starting point, and then repeat the above process until a loop with a state number greater than 2 is found. Fig.~\ref{loop8} shows the loop search process in the 8th order adjacency matrix. Start searching from state 000 and find the first unvisited adjacency state 001. The number of states in this loop is 2, so it should be discarded, and backtracking goes to state 000, find another unvisited adjacent state 100, and continue to search as deeply as possible. By analogy, the loop search process is obtained in Fig.~\ref{loop8}, where the dotted line indicates that the loop with the number of states equal to 2 is discarded.

\begin{figure}[!htb]
	\centering
	\includegraphics[width=\figwidth]{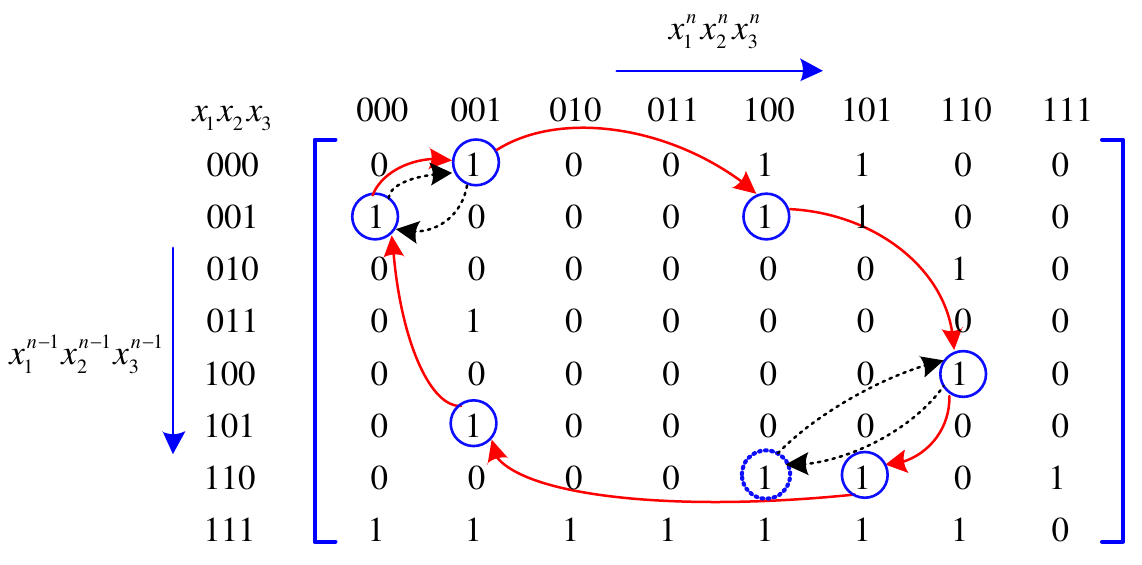}
	\caption{Depth-first search a loop for the adjacency matrix}
	\label{loop8}
\end{figure}

Here are some examples of using the flow chart of the loop-state contraction algorithm shown in Fig.~\ref{loop7} to analyze and judge whether the adjacency matrix is strongly connected.

\begin{example}
The adjacency matrix corresponding to Eq.~\eqref{2dexp} is shown in Fig.~\ref{eq7state}(c), and the flow chart of the loop-state contraction algorithm shown in Fig~\ref{loop7} is used for analysis. The third column in the adjacency matrix before contraction is all 0, so it is not strongly connected.
\end{example}
\begin{example}
The adjacency matrix corresponding to Eq.~\eqref{gfxy} is shown in Fig.~\ref{eq8state}(d), and the flow chart of the loop-state contraction algorithm shown in Fig.~\ref{loop7} is used for analysis, and then the process is shown in Fig.~\ref{loop9}-\ref{loop10}. The Fig.~\ref{loop9} indicates that a loop found as
$$(00,01)\to (01,11)\to (11,00)\to (00,01)$$
Fig.~\ref{loop10} shows that further using the found loop , according to the loop-state contraction algorithm, there is  at least one 1 in each row and each column of the 2nd order matrix after contraction. It can be seen that the state transition diagram of ${G_F}$ is strongly connected, which can ensure that Eq.~\eqref{gfxy} meets the requirements of HDDCS.

\begin{figure}[!htb]
	\centering
	\includegraphics[width=0.58\figwidth]{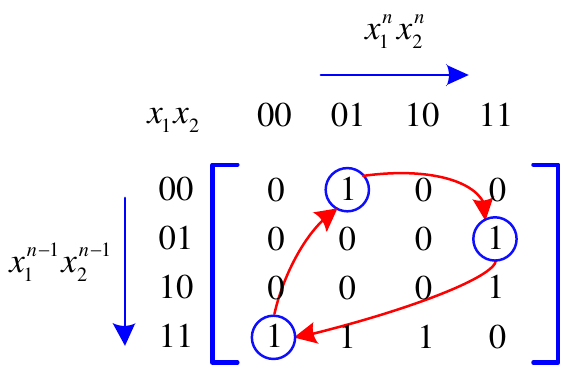}
	\caption{The adjacency matrix with one found loop}
	\label{loop9}
\end{figure}

\begin{figure}[!htb]
	\centering
	\includegraphics[width=0.9\figwidth]{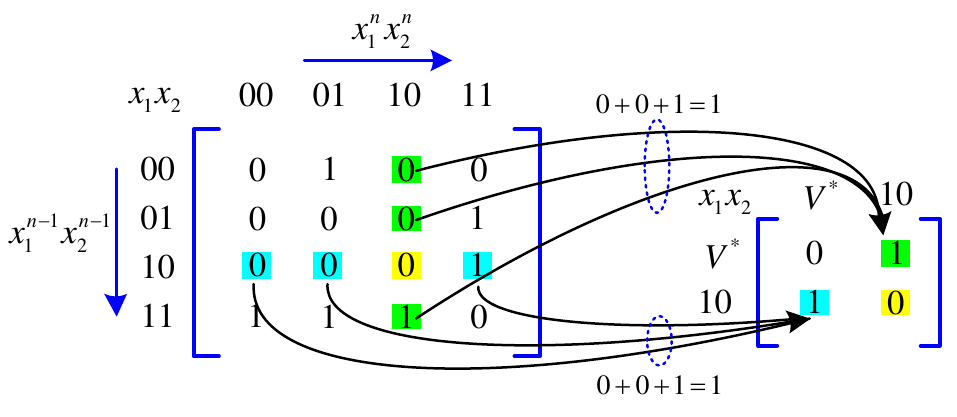}
	\caption{Contracting the 4th order adjacency matrix to the 2nd order adjacency matrix according to the loop-state contraction algorithm }
	\label{loop10}
\end{figure} 
\end{example}

\begin{example}
It is known that the iterative equation of a three-dimensional digital system that is not controlled by random sequences is
\begin{equation}
	\label{eq3-8}
	\left\{
	\begin{IEEEeqnarraybox}[][c]{l}
	\IEEEstrut
	x_1^n=F_1(x^{n-1}_1, x^{n-1}_2, x^{n-1}_3)=  \overline{x^{n-1}_1}  \cdot  \overline{x^{n-1}_2} + \overline{x^{n-1}_3} \,,\\
	x_2^n=F_2(x^{n-1}_1, x^{n-1}_2, x^{n-1}_3)=  \overline{x^{n-1}_1}  \oplus  \overline{x^{n-1}_2} \cdot \overline{x^{n-1}_3} \,,\\
	x_3^n=F_3(x^{n-1}_1, x^{n-1}_2, x^{n-1}_3)= \overline{x^{n-1}_1}  \oplus  \overline{x^{n-1}_2} \oplus \overline{x^{n-1}_3} \,,
	\IEEEstrut	
	\end{IEEEeqnarraybox} \right.
	\end{equation} 
According to Eq.~\eqref{eq3-8}, the corresponding iterative function controlled by random sequences is
\begin{equation}
	\label{eq3-9}
	\left\{
	\begin{IEEEeqnarraybox}[][c]{l}
	\IEEEstrut
	x_1^n= x_1^{n-1}\cdot \overline{s^n} +(\overline{x^{n-1}_1}  \cdot  \overline{x^{n-1}_2} + \overline{x^{n-1}_3})\cdot s^n \,,\\
	x_2^n= x_2^{n-1}\cdot \overline{u^n} +(\overline{x^{n-1}_1}  \oplus  \overline{x^{n-1}_2} \cdot \overline{x^{n-1}_3})\cdot u^n \,,\\
	x_3^n= x_3^{n-1}\cdot \overline{v^n}+(\overline{x^{n-1}_1}  \oplus  \overline{x^{n-1}_2} \oplus \overline{x^{n-1}_3})\cdot v^n \,,
	\IEEEstrut	
	\end{IEEEeqnarraybox} \right.
	\end{equation} 
 where $s={s^{1}}{s^2}{s^{3}}\cdots ,u={{u}^{1}}{{u}^2}{{u}^ {3}}\cdots ,v={{v}^{1}}{{v}^2}{{v}^{3}}\cdots $ are three random sequences. First, the standardized 8th order adjacency matrix corresponding to Eq.~\eqref{eq3-9}  is obtained as Fig.~\ref{loop11}(a). Secondly, according to the loop-state  contraction algorithm, the 8th order adjacency matrix is  contracted for the first time to obtain the 4th order adjacency matrix as shown in Fig.~\ref{loop11}(b). Finally, according to the loop-state  contraction algorithm, the 4th order adjacency matrix is further contracted at the second time into a 2nd order adjacency matrix, as shown in Fig.~\ref{loop11}(c). It can be seen that the state transition diagram of ${G_F}$ is strongly connected, which can ensure that Eq.~\eqref{eq3-9} meets the requirements of HDDCS.

\begin{figure}[!htb]
	\centering
	\includegraphics[width=\figwidth]{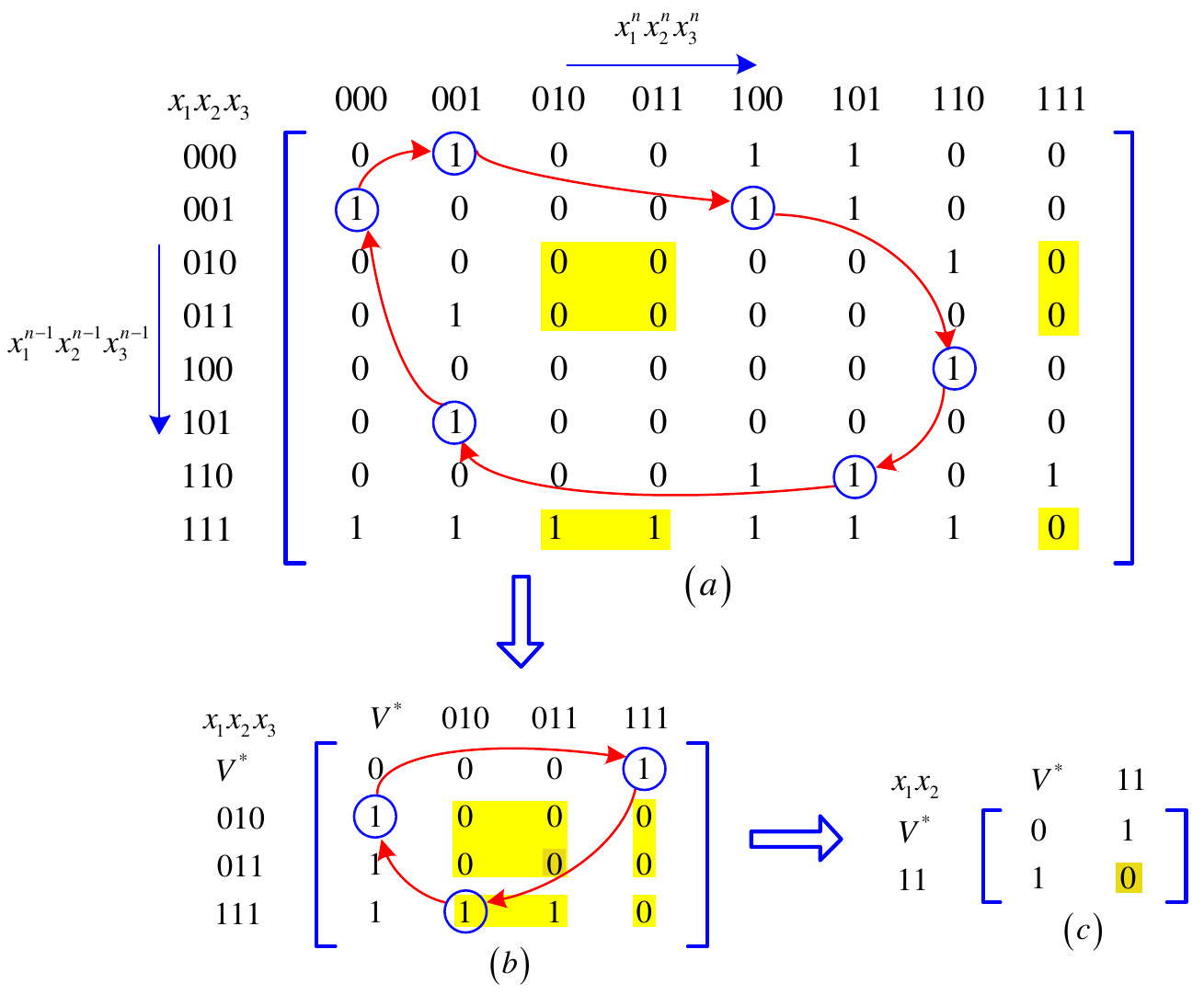}
	\caption{The 8th order adjacency matrix is contracted twice to a 2nd order adjacency matrix according to the loop-state  contraction algorithm}
	\label{loop11}
\end{figure}  
\end{example}

\subsection{Construct an iterative functions $F$ to meet the requirements of HDDCS by the loop-state  contraction algorithm}

In this section, we use the loop-state contraction algorithm to construct iterative functions $F$ that meets the requirements of HDDCS, which can ensure that the state transition diagram of ${G_F}$ is strongly connected.

According to the selection method of the iterative function in Section~\ref{The design of iterative function},  the exhaustive method is  firstly used to get all possible iterative functions $F\in S$ with the corresponding ${G_F}$. On this basis, all possible iterative equations controlled by random sequences ${E^{k+1}}={G_F}({E^{k}})(k=0,1,2, \ldots )$ can be further obtained. Secondly, according to the method shown in the Fig.~\ref{eq8state}, all possible adjacency matrices corresponding to ${E^{k+1}}={G_F}({E^{k}})(k=0,1,2,\ldots)$ are found.  Then, according to the loop-state contraction algorithm shown in Fig.~\ref{loop7}, all these possible adjacency matrices are analyzed to determine which has the  strongly connected state transition diagram of ${G_F}$. Finally, find out some iterative functions, which corresponding to  ${E^{k+1}}={G_F}({E^{k}} )(k=0,1,2,\ldots )$ can meet the requirements of HDDCS. Part of the 2-5 dimensional iterative functions $F$ constructed in this way are shown in Table~\ref{mddcs}.

\begin{table*}
	\centering
	\caption{Part of the 2-5 dimensional iterative functions $F$ constructed with the loop-state  contraction algorithm}
	\label{mddcs}
	\resizebox{\textwidth}{!}{
		\begin{tabular}{|l|l|l|l| }	\hline	
			$\langle 1 \rangle \left\{ {\begin{array}{*{20}{l}}
					{F_1(x_1, x_2) = x_1 \oplus \overline  x_2 }\\
					{F_2(x_1, x_2) =\overline x_1 \cdot \overline  x_2 }
			\end{array}} \right. $
			&$\langle 6 \rangle \left\{ {\begin{array}{*{20}{l}}
					{F_1(x_1, x_2) = \overline x_1  + \overline x_2 }\\
					{F_2(x_1, x_2) = x_1 \cdot \overline x_2 }
			\end{array}} \right. $ 
			& $\langle 11 \rangle \left\{ {\begin{array}{*{20}{l}}
					{F_1(x_1, x_2, x_3) = \overline x_1  \oplus \overline x_2 \cdot \overline x_3 }\\
					{F_2(x_1, x_2, x_3) = \overline x_1  \cdot \overline x_2 + \overline x_3}\\
					{F_3(x_1, x_2, x_3) = \overline x_1  \oplus \overline x_2 \oplus \overline x_3}
			\end{array}} \right. $	
			&$\langle 16 \rangle \left\{ {\begin{array}{*{20}{l}}
					{F_1(x_1, x_2, x_3, x_4) =  x_1 \oplus \overline x_2 \oplus  x_3 \cdot \overline x_4 }\\
					{F_2(x_1, x_2, x_3, x_4) =  x_1 \cdot \overline x_2 + x_3 \oplus x_4 }\\
					{F_3(x_1, x_2, x_3, x_4) =  x_1  + \overline x_2 + \overline x_3 \cdot \overline x_4}\\
					{F_4(x_1, x_2, x_3, x_4) =  x_1 \oplus \overline x_2 +  x_3 \oplus \overline x_4}
			\end{array}} \right. $	    \\ \hline
			
			$\langle 2 \rangle \left\{ {\begin{array}{*{20}{l}}
					{F_1(x_1, x_2) = \overline x_1  \oplus  x_2 }\\
					{F_2(x_1, x_2) =\overline x_1 \cdot \overline x_2}
			\end{array}} \right. $ 
			&$\langle 7 \rangle \left\{ {\begin{array}{*{20}{l}}
					{F_1(x_1, x_2) = x_1 \oplus \overline x_2 }\\
					{F_2(x_1, x_2) = \overline x_1  \cdot \overline x_2 }
			\end{array}} \right. $ 
			& $\langle 12 \rangle \left\{ {\begin{array}{*{20}{l}}
					{F_1(x_1, x_2, x_3) =  x_1  + \overline x_2 \cdot \overline x_3 }\\
					{F_2(x_1, x_2, x_3) =  x_1  \oplus \overline x_2 + \overline x_3}\\
					{F_3(x_1, x_2, x_3) = \overline x_1  \oplus \overline x_2 + \overline x_3}
			\end{array}} \right. $ 
			&$\langle 17 \rangle \left\{ {\begin{array}{*{20}{l}}
					{F_1(x_1, x_2, x_3, x_4) = \overline x_1 \cdot  x_2 \oplus \overline x_3 \cdot  x_4 }\\
					{F_2(x_1, x_2, x_3, x_4) =  x_1 \oplus \overline x_2 + \overline x_3 \oplus x_4 }\\
					{F_3(x_1, x_2, x_3, x_4) =  x_1  \cdot  x_2 \oplus  x_3 \cdot  x_4}\\
					{F_4(x_1, x_2, x_3, x_4) =  x_1 +  x_2 \cdot \overline x_3 \cdot \overline x_4}
			\end{array}} \right. $  \\ \hline		
			
			$\langle 3 \rangle \left\{ {\begin{array}{*{20}{l}}
					{F_1(x_1, x_2) = \overline x_1  \cdot  x_2 }\\
					{F_2(x_1, x_2) = \overline x_1 \oplus  x_2 }
			\end{array}} \right. $
			&$\langle 8 \rangle \left\{ {\begin{array}{*{20}{l}}
					{F_1(x_1, x_2) = \overline x_1  \cdot \overline x_2 }\\
					{F_2(x_1, x_2) = x_1 \cdot \overline x_2 }
			\end{array}} \right. $ 
			&$\langle 13 \rangle \left\{ {\begin{array}{*{20}{l}}
					{F_1(x_1, x_2, x_3, x_4) = \overline x_1 \cdot x_2 \oplus x_3 \cdot x_4 }\\
					{F_2(x_1, x_2, x_3, x_4) = \overline x_1 \cdot x_2 \oplus x_3 \oplus x_4 }\\
					{F_3(x_1, x_2, x_3, x_4) =  x_1  \oplus \overline x_2 + \overline x_3 \cdot  \overline x_4}\\
					{F_4(x_1, x_2, x_3, x_4) =  x_1 \cdot \overline x_2 \oplus  x_3 \oplus x_4}
			\end{array}} \right. $		
			&$\langle 18 \rangle \left\{ {\begin{array}{*{20}{l}}
					{F_1(x_1, x_2, x_3, x_4) = \overline x_1 \cdot \overline x_2 \oplus \overline x_3 +  x_4 }\\
					{F_2(x_1, x_2, x_3, x_4) = \overline x_1 \oplus \overline x_2 + x_3 \oplus x_4 }\\
					{F_3(x_1, x_2, x_3, x_4) =  x_1 \cdot \overline  x_2 \oplus  x_3 \cdot  x_4}\\
					{F_4(x_1, x_2, x_3, x_4) = \overline x_1 \cdot \overline x_2 + \overline x_3 \oplus  x_4}
			\end{array}} \right. $ \\\hline

			$\langle 4 \rangle \left\{ {\begin{array}{*{20}{l}}
					{F_1(x_1, x_2) = \overline x_1  \oplus \overline x_2 }\\
					{F_2(x_1, x_2) = \overline x_1  + \overline x_2 }
			\end{array}} \right. $	
			&$\langle 9 \rangle \left\{ {\begin{array}{*{20}{l}}
					{F_1(x_1, x_2) = \overline x_1  \cdot \overline x_2 }\\
					{F_2(x_1, x_2) = \overline x_1  \oplus x_2}
			\end{array}} \right. $     
			&$\langle 14 \rangle \left\{ {\begin{array}{*{20}{l}}
					{F_1(x_1, x_2, x_3, x_4) =  x_1 \cdot x_2 \oplus x_3 + x_4 }\\
					{F_2(x_1, x_2, x_3, x_4) = \overline x_1 \oplus x_2 \cdot x_3 \oplus x_4 }\\
					{F_3(x_1, x_2, x_3, x_4) =  x_1  \cdot \overline x_2 \oplus \overline x_3 \cdot  x_4}\\
					{F_4(x_1, x_2, x_3, x_4) = \overline x_1 + \overline x_2 + \overline x_3 \oplus x_4}
			\end{array}} \right. $ 
			&$\langle 19 \rangle \left\{ {\begin{array}{*{20}{l}}
					{F_1(x_1, x_2, x_3, x_4, x_5) =  x_1 +  x_2  + x_3 + \overline x_4 \oplus x_5}\\
					{F_2(x_1, x_2, x_3, x_4, x_5) =  x_1 +  x_2 \oplus \overline x_3 \oplus x_4 \cdot \overline x_5 }\\
					{F_3(x_1, x_2, x_3, x_4, x_5) =  x_1 + \overline  x_2 + \overline  x_3 + \overline x_4 \oplus x_5}\\
					{F_4(x_1, x_2, x_3, x_4, x_5) =  x_1 + \overline x_2 +  x_3 \oplus  x_4 + \overline x_5}\\
					{F_5(x_1, x_2, x_3, x_4, x_5) = \overline x_1 + \overline x_2 + \overline x_3 + \overline x_4 \oplus x_5}
			\end{array}} \right. $  \\ \hline	
			
			$\langle 5 \rangle \left\{ {\begin{array}{*{20}{l}}
					{F_1(x_1, x_2) = \overline x_1  \oplus \overline x_2 }\\
					{F_2(x_1, x_2) = x_1 + \overline x_2 }
			\end{array}} \right. $ 
			&$\langle 10 \rangle \left\{ {\begin{array}{*{20}{l}}
					{F_1(x_1, x_2, x_3) =  x_1  \oplus  x_2 + x_3 }\\
					{F_2(x_1, x_2, x_3) = \overline x_1 \cdot  x_3}\\
					{F_3(x_1, x_2, x_3) =  \overline x_1  + \overline x_2 \oplus  x_3}
			\end{array}} \right. $	    
			&$\langle 15 \rangle \left\{ {\begin{array}{*{20}{l}}
					{F_1(x_1, x_2, x_3, x_4) =  x_1 \oplus x_2 \oplus x_3 + x_4 }\\
					{F_2(x_1, x_2, x_3, x_4) = \overline x_1 \oplus x_2 + x_3 \oplus x_4 }\\
					{F_3(x_1, x_2, x_3, x_4) =  x_1  + \overline x_2 \oplus \overline x_3 \cdot  x_4}\\
					{F_4(x_1, x_2, x_3, x_4) =  x_1 + \overline x_2 +  x_3 \oplus x_4}
			\end{array}} \right. $	   
			&$\langle 20 \rangle \left\{ {\begin{array}{*{20}{l}}
					{F_1(x_1, x_2, x_3, x_4, x_5) = \overline x_1 + \overline x_2  + x_3 \oplus \overline x_4 \oplus x_5}\\
					{F_2(x_1, x_2, x_3, x_4, x_5) = \overline x_1 + \overline x_2 \oplus \overline x_3 \oplus x_4 \oplus \overline x_5 }\\
					{F_3(x_1, x_2, x_3, x_4, x_5) = \overline x_1 \oplus \overline  x_2 + \overline  x_3 \cdot \overline x_4 \oplus x_5}\\
					{F_4(x_1, x_2, x_3, x_4, x_5) =  x_1 + \overline x_2 \oplus  x_3 \oplus  x_4 + \overline x_5}\\
					{F_51(x_1, x_2, x_3, x_4, x_5) = \overline x_1 + \overline x_2 + \overline x_3 + \overline x_4 \oplus x_5}
			\end{array}} \right. $		    \\ \hline
			
		\end{tabular}
	}
\end{table*}

\section{Chaotic ESN and its application in Mackey-Glass time series prediction}
\label{Prediction}
This section uses the adjacency matrix corresponding to HDDCS to construct the chaotic ESN to predict the Mackey-Glass time series. The experimental results show that the chaotic ESN constructed by the higher-dimensional system has better predictive performance than the chaotic ESN constructed by the lower-dimensional system when the size of reservoir is fixed. 

\subsection{Chaotic ESN and its construction method}
Traditional ESN uses a  reservoir composed of randomly sparsely connected neurons as the hidden layer. The input and feedback weights are initialized to random values, and the spectral radius of the  reservoir is pre-defined to guarantee the stability of the network. During network training, only the connection weights from the hidden layer to the output layer need to be trained. Assuming that the number of neurons in the input layer, reserve pool, and output layer of the ESN network are 1, M, and 1, respectively. The state update equation of the traditional ESN is

\begin{equation}
\label{eq4-1}
Z(t)=\tanh (WZ(t-1)+{W^{in}}I(t)+{W^{fb}}O(t-1)+V(t))\,,
\end{equation} 
where $t=1,2,3,\ldots$, $Z(t)={(Z_{1}(t),Z_2(t),\ldots,Z_M(t))}^T$ is the $M$-dimensional reservoir state vector at time $t$ with ${Z_i}(t)\in (-1,+1),i=1,2,\ldots ,M$, $Z(t-1)$ is the $M$-dimensional reservoir state vector at time $t-1$, $Z(0)$ is the pre-set initial value, $\tanh (\cdot )$ is a vector-valued nonlinear activation function, $W$ is the $M\times M$ reservoir weight matrix with a spectral radius smaller than unity, $W^{in}$ is the $M\times 1$ input weight matrix, ${W^{fb}}$ is $M$-dimensional weight vector for feedback connections from the output neuron to the reservoir, the value of each element in ${W^{in}}$ and ${W^{fb}}$ obeys a uniform distribution in the interval $[-1,+1)$, note that the matrix $W,{W^ {fb}},{W^{in}}$ are determined before training and remain unchanged during training and prediction, $I(t)$ is the $1$-dimensional input signal at time $t$, $O(t)$ is the $1$-dimensional output signal at time $t$, $O(0)$ is the pre-set initial value, and $V(t)$ is $M$-dimensional noise vector uniformly distributed in $[-1.0\times10^{-10},+1.0 \times 10^{-10})$.

In Eq.\eqref{eq4-1}, the output $O(t)$ is

\begin{equation}
	\label{eq4-2}
	O(t)=\tanh ({W^{out}}(Z(t),I(t)))\,,
\end{equation}
where $I(t)$ and $O(t)$ are both 1-dimensional signals at time $t$, $(Z(t),I(t))={{({Z_{1}}(t ),{Z_2}(t),\ldots ,{Z_M}(t),I(t))}^{T}}$ is the $M$-dimensional reservoir state vector, ${W^{out}}$ is $(M+1)$-dimensional weight vector for connections from the reservoir to the output neuron. Note that ${W^{out}}$ is determined after the training.

In Eq.~\eqref{eq4-1}, the reservoir weight matrix $W$ is
\begin{equation}
	\label{eq4-3}
W=k({W_{R}}\odot {W_{S}}/|{{\lambda }_{\max }}|)\,,
\end{equation}
where $\odot $ means that the elements of the same row and column in two matrices ${W_{R}}$ and ${W_{S}}$ are multiplied, ${W_{R}}$ is a random matrix with the values uniformly distributed in intervalv $[-0.5,+0.5)$, ${W_{S}}$ is a matrix with randomly generated elements with values of 0 or 1, ${{\lambda }_{\max }}$ is the largest eigenvalue of the matrix ${W_{R}}\odot {W_{S}}$, $k$ is the spectral radius of $W$, $0<k<1$ is set to guarantee the reservoir to work in a stable region.

It is well-known that the richer the dynamic behaviors of the ESN, the better the  performance of corresponding network is, and the dynamic behaviors of the network are related to the network structure of the reservoir.
Note that in the traditional ESN represented by Eq.\eqref{eq4-1}-\eqref{eq4-3}, the reservoir weight matrix $W$ is confirmed by calculation ${W_{R}}\odot {W_{S}}$, because the matrix ${W_{R}}$ and ${W_{S}}$ are random matrices, so $W$ usually does not guarantee the strong connectivity.
In order to obtain better performance, according to Eq.~\eqref{eq4-3}, we use the adjacency matrix ${{A}_{M\times M}} $ corresponding to ${G_F}$ in HDDCS designed in Section~\ref{Construction of HDDCS}  to replace the ${W_{S}}$ in Eq.~\eqref{eq4-3},
where ${{A}_{M\times M}}$ is a square matrix with order $M={2^{mN}}$, $m$ is the dimension of ${G_F}$, and $N $ is the finite precision.
For given $m$ and $N$, according to the iterative function given in the table \ref{mddcs}, the corresponding HDDCS and  the corresponding adjacency matrix ${{A}_ {M\times M}}$ can be obtained, and according to Theorem~\ref{thm5}, it can be ensured that the corresponding adjacency matrix ${{A}_{M\times M}}$ under arbitrary $N$ is still strongly connected.
The literature ~\cite{Bahi:Neural:2012} has proved that the ESN constructed by the adjacency matrix ${{A}_{M\times M}}$ with strong connectivity  satisfies the chaotic definition of Devaney.
Therefore, when we use the adjacency matrix ${{A}_{M\times M}}$ to replace the ${W_{S}}$ in Eq.\eqref{eq4-3}, the traditional ESN expressed by  Eq.\eqref{eq4-1}-\eqref{eq4-2} is transformed into a chaotic ESN.

\subsection{Mackey-Glass time series prediction}
Time series data often have the characteristics of high noise, randomness and nonlinearity. Its modeling, analysis and prediction problems have always been the hotspots of academic research. Typically, in order to predict time series more accurately, time series models are required to have both good nonlinear approximation capabilities and good memory capabilities. In order to solve this problem, artificial intelligence methods such as support vector networks and neural networks have been introduced into the field of time series analysis. The Mackey-Glass time series prediction has now become a typical benchmark problem for verifying the processing capabilities of neural networks~\cite{Shi:MGS:Ieee2007}.

The Mackey–Glass time series is deduced from a time-delay differential system with the form~\cite{Ren:Performance:ITC2020}
\begin{equation}
	\label{eq4-4}
\frac{\mathrm{d} P(t)}{\mathrm{d} t}=\frac{\beta P(t-\alpha )}{1+P{{(t-\alpha )}^{10}}}-\gamma P(t)\,,
\end{equation}
where $\alpha $, $\beta $, $\gamma $ are real numbers.

According to \cite{Jaeger:ESN:2004}, the values standardly employed in most of the Mackey–Glass time series prediction literature are $\alpha =17$, $\beta =0.2$, and $\gamma =0.1$. Then through Eq.\eqref{eq4-4}, we can get the time series data $P(1),P(2),\ldots ,P(3000)$ containing 3000 time points. Among them, noise $V(t)$ is added to the data of the first 2000 time points, which is used to train the chaotic ESN to determine the value of ${W^{out}}$. However, no noise $V(t)$ is added to the data at the following 1000 time points, which is used for Mackey-Glass time series prediction.

According to Eq.~\eqref{eq4-1}-\eqref{eq4-4}, in the training stage, the number of training time points is usually selected to be 2000. First set $Z(0)=0$, $I(1)=I(2)=\cdots=I(2000)=0.02$, and $O(0)=0, O(1)=P(1), O(2)=P(2), \ldots,O(2000)=P(2000)$, then iteratively get $Z(1), Z(2) ,\ldots, Z(2000)$ by Eq.~\eqref{eq4-1}. Let $U$ be a matrix of order $(M+1)\times 2000$, namely
	\begin{equation*}
		        U= \begin{pmatrix}		
			Z_1(1)	& Z_1(2) 	& \ldots 	& Z_1(2000) \\
			Z_2(1) 	& Z_2(2) 	& \ldots 	& Z_2(2000) \\
			\vdots 	& \vdots 	& \ddots 	& \vdots \\
			Z_M(1) 	& Z_M(2) 	& \ldots 	& Z_M(2000)\\
			I(1)   	& I(2)		& \ldots	& ~~I(2000)	
		        \end{pmatrix}\,.
	\end{equation*}
Then according to Eq.\eqref{eq4-2}, we have
\begin{equation}
	\label{eq4-5}
	\begin{IEEEeqnarraybox}[][c]{ll}
	\IEEEstrut
&({\tanh}^{-1}(O(1)),{\tanh}^{-1}(O(2)),\ldots,{\tanh}^{-1}(O(2000)))\\
	=&W^{out}U\,,
	\IEEEstrut
\end{IEEEeqnarraybox} 
\end{equation}
Finally, using the pseudo-inverse operation of the matrix, the output $1\times (M+1)$ weight matrix ${W^{out}}$ is
\begin{equation}
	\label{eq4-6}
	\begin{IEEEeqnarraybox}[][c]{ll}
	\IEEEstrut
	&W^{out}\\
	=&(\!{\tanh}^{-1}(\!O(1)\!)\!,{\tanh}^{-1}(\!O(2)\!)\!,\ldots,{\tanh}^{-1}(\!O(2000)\!)\!)U^+\,,
	\IEEEstrut
\end{IEEEeqnarraybox} 
\end{equation}
where ${U^{+}}$ represents the pseudo-inverse operation of the  $(M+1)\times 2000$ matrix $U$. After the operation, the order of the matrix is $2000\times (M+1)$.

According to Eq.~\eqref{eq4-1}-\eqref{eq4-2}, in the prediction stage, first set $I(2001)=I(2002)=\cdots=I(3000)=0.02$, $O(2000)=P(2000)$ and $V(2001)=V(2002)=\cdots=V(3000)=0$, and substitute the $Z(2000)$ obtained from the training stage into Eq.~\eqref{eq4-1} to get $Z(2001)$, and then substituting $Z(2001)$ into Eq.~\eqref{eq4-2} to get $O(2001)$. Substitute the obtained $Z(2001)$ and $O(2001)$ into Eq.~\eqref{eq4-1}-\eqref{eq4-2} to get the corresponding outputs $Z(2002)$ and $O(2002)$, the rest can be deduced by analogy. Through this iterative method, the available Mackey-Glass time prediction sequence is $O(2001),O(2002),\ldots,O(3000)$.

\renewcommand\arraystretch{1.6}
\begin{table*}[!htb]
	\centering
	\caption{comparisons of prediction performance for makey-glass time series using chaotic ESN with different iterative equations}
	\label{ddfc}
	\setlength\tabcolsep{1pt}
	\scalebox{0.8}{
	\begin{tabular}{|l|c|c|c|c|}	\hhline{|-----|}	
		\diagbox[dir=SE]{HDDCS and the adjacency matrix}{RMSE}{Size of the reservoir $M$} & $2^{8}$  & $2^{9}$ & $2^{10} $ & $2^{12}$ \\ \hhline{|=====|}  
		
		$ \left\{ {\begin{array}{*{20}{l}}
				{{x_{1}^n}= x_1^{n-1}\cdot\overline{s^{n}}+(( {x_{1}^{n-1}} \oplus \overline {x_{2}^{n-1}})\cdot s^n)}\\
				{{x_{2}^n}=x_2^{n-1}\cdot\overline{u^{n}}+ ((\overline{x_{1}^{n-1}}\cdot \overline  {x_{2}^{n-1}})\cdot u^n)}
		\end{array}} \right.\Rightarrow A_{M\times M}^{\langle 1 \rangle } $
		& $4.76 \times 10^{-3}$ & None & $4.81\times 10^{-4}$ & $4.06\times 10^{-4}$		\\ \hhline{|-----|}
		$ \left\{ {\begin{array}{*{20}{l}}
				{{x_{1}^n}= x_1^{n-1}\cdot\overline{s^{n}}+((\overline {x_{1}^{n-1}} \oplus  {x_{2}^{n-1}})\cdot s^n)}\\
				{{x_{2}^n}= x_2^{n-1}\cdot\overline{u^{n}}+((\overline {x_{1}^{n-1}} \cdot \overline {x_{2}^{n-1}})\cdot u^n)}
		\end{array}} \right. \Rightarrow A_{M\times M}^{\langle 2 \rangle }$
		&$1.61\times 10^{-2}$ & None & $1.23\times 10^{-3}$ & $3.45 \times 10^{-4}$		\\ \hhline{|-----|}
		$ \left\{ {\begin{array}{*{20}{l}}
				{{x_{1}^n}= x_1^{n-1}\cdot\overline{s^{n}}+((\overline {x_{1}^{n-1}}\cdot  {x_{2}^{n-1}})\cdot s^n)}\\
				{{x_{2}^n}= x_2^{n-1}\cdot\overline{u^{n}}+((\overline {x_{1}^{n-1}}\oplus  {x_{2}^{n-1}})\cdot u^n)}
		\end{array}} \right. \Rightarrow A_{M\times M}^{\langle 3 \rangle }$
		&$4.34 \times 10^{-2}$ & None & $2.27 \times 10^{-4}$ & $2.87\times 10^{-4}$	    \\ \hhline{|=====|}

		$ \left\{ {\begin{array}{*{20}{l}}
				{{x_{1}^n}=x_1^{n-1}\cdot\overline{s^{n}}+ ((\overline {x_{1}^{n-1}} \oplus \overline {x_{2}^{n-1}} \cdot \overline {x_{3}^{n-1}})\cdot s^n)}\\
				{{x_{2}^n}=x_2^{n-1}\cdot\overline{u^{n}}+ ((\overline {x_{1}^{n-1}} \cdot \overline {x_{2}^{n-1}} + \overline  {x_{3}^{n-1}})\cdot u^n)}\\
				{{x_{3}^n}=x_3^{n-1}\cdot\overline{v^{n}}+ ((\overline { x_{1}^{n-1}} \oplus \overline {x_{2}^{n-1}} \oplus \overline {x_{3}^{n-1}})\cdot v^n)}
		\end{array}} \right. \Rightarrow A_{M\times M}^{\langle 11 \rangle }$
		& None & $5.71\times 10^{-4}$ & None & $1.28\times 10^{-4}$      \\\hhline{|-----|} 
		$ \left\{ {\begin{array}{*{20}{l}}
				{{x_{1}^n}=x_1^{n-1}\cdot\overline{s^{n}}+ (( {x_{1}^{n-1}} + \overline {x_{2}^{n-1}}\cdot \overline {x_{3}^{n-1}})\cdot s^n)}\\
				{{x_{2}^n}=x_2^{n-1}\cdot\overline{u^{n}}+ (( {x_{1}^{n-1}} \oplus \overline {x_{2}^{n-1}}+ \overline {x_{3}^{n-1}})\cdot u^n)}\\
				{{x_{3}^n}=x_3^{n-1}\cdot\overline{v^{n}}+ ((\overline {x_{1}^{n-1}} \oplus \overline {x_{2}^{n-1}}+ \overline {x_{3}^{n-1}})\cdot v^n)}
		\end{array}} \right. \Rightarrow A_{M\times M}^{\langle 12 \rangle }$	
		& None & $1.10 \times 10^{-3}$ & None & $2.75 \times 10^{-4}$      \\ \hhline{|=====|}

		$ \left\{ {\begin{array}{*{20}{l}}
				{{x_{1}^n}=x_1^{n-1}\cdot\overline{s^{n}}+ ((\overline {x_{1}^{n-1}}\cdot {x_{2}^{n-1}}\oplus {x_{3}^{n-1}}\cdot {x_{4}^{n-1}})\cdot s^n)}\\
				{{x_{2}^n}=x_2^{n-1}\cdot\overline{u^{n}}+ ((\overline {x_{1}^{n-1}}\cdot {x_{2}^{n-1}}\oplus {x_{3}^{n-1}}\oplus {x_{4}^{n-1}})\cdot u^n)}\\
				{{x_{3}^n}=x_3^{n-1}\cdot\overline{v^{n}}+  (({x_{1}^{n-1}} \oplus \overline {x_{2}^{n-1}}+ \overline {x_{3}^{n-1}}\cdot \overline {x_{4}^{n-1}})\cdot v^n)}\\
				{{x_{4}^n}=x_4^{n-1}\cdot\overline{w^{n}}+  (({x_{1}^{n-1}}\cdot \overline {x_{2}^{n-1}}\oplus  {x_{3}^{n-1}}\oplus {x_{4}^{n-1}})\cdot w^n)}
		\end{array}} \right. \Rightarrow A_{M\times M}^{\langle 13 \rangle }$	
		
		& $1.90 \times 10^{-3}$ & None &  None  & $1.08\times 10^{-4}$  \\ \hhline{|-----|}	
	\end{tabular}	}
\end{table*}

Fig.~\ref{ESN for MGS prediction}(a)-(b) respectively show the prediction results obtained by traditional ESN and chaotic ESN after 2000 training time points. The iterative function $\langle 13 \rangle$ from Table~\ref{mddcs}  is used to obtain the corresponding HDDCS, then the adjacency matrix ${{A}_{M\times M}}$ corresponding to the above HDDCS is used to construct the network structure of the reservoir, which makes ESN be a chaotic ESN.

As $m=4$ and $N=3$, the order of the adjacency matrix ${{A}_{M\times M}}$ corresponding to the iterative function $\langle 13 \rangle$ from Table~\ref{mddcs} is $M\times M=4096\times 4096$. In Fig.~\ref{ESN for MGS prediction}(a),  the black solid line is correct continuation of the Mackey-Glass time series, the blue dotted line is the chaotic ESN prediction of the Mackey-Glass time series, and the red dashed line is the traditional ESN prediction of the Mackey-Glass time series.
It should be noted that the 1000 time points of both the traditional ESN prediction and the chaotic ESN prediction all  fit the correct continuation of the Mackey-Glass time series well.
Fig.~\ref{ESN for MGS prediction}(b) illustrates the prediction error $O(t)-P(t)$ development on long prediction runs for the traditional ESN and chaotic ESN, both of them can achieve high prediction accuracies within 500 time points, but the long-term prediction performance of chaotic ESN is relatively better.

\begin{figure}[!htb]
	\centering
	\subfigure[Prediction of the Mackey-Glass time series for the traditional ESN and chaotic ESN]{
	  \includegraphics[width=8cm]{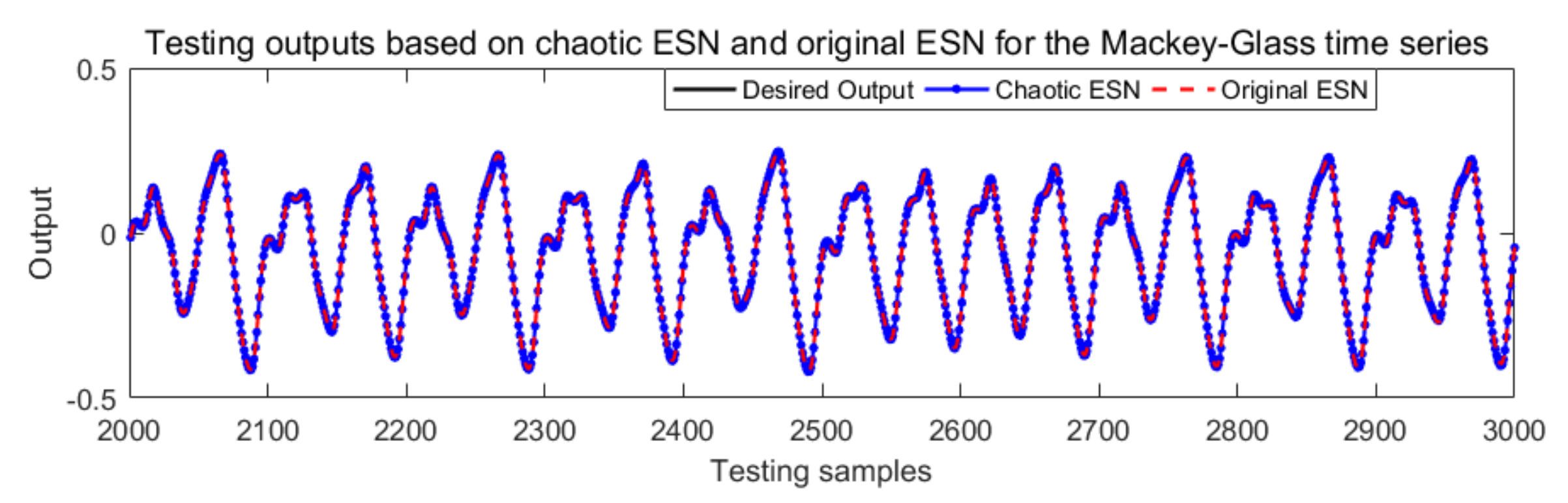}}
	\subfigure[Prediction error $O(t)-P(t)$ of the Mackey-Glass time series for the traditional ESN and chaotic ESN]{
	  \includegraphics[width=8cm]{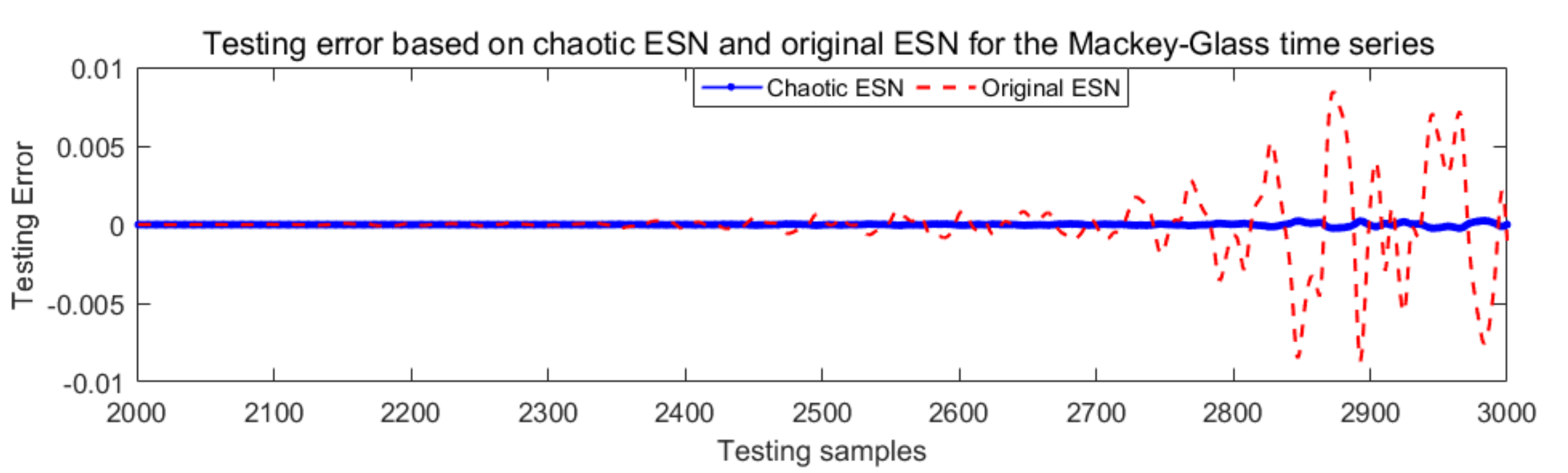}}
	\caption{Prediction and Prediction error $O(t)-P(t)$}
	\label{ESN for MGS prediction} 
  \end{figure}

We select the iterative functions $\langle 1 \rangle$, $\langle 2 \rangle$, $\langle 3 \rangle$, $\langle 11 \rangle$, $\langle 12 \rangle$, and $\langle 13 \rangle $ from Table~\ref{mddcs}, and get the corresponding HDDCSs as shown in the first column of Table~\ref{ddfc} ($s,u,v,w$ in a HDDCS are some independent random sequences). On the basis of the HDDCSs, the corresponding adjacency matrix $A_{M\times M}^{\langle 1 \rangle },A_{M\times M}^{\langle 2 \rangle },A_{M\times M}^{\langle 3 \rangle },A_{M\times M}^{\langle 11 \rangle },A_{M\times M}^{\langle 12 \rangle },A_{M\times M}^{\langle 13 \rangle }$ are obtained, where  $M={2^{mN}}$, $m$ is the dimension of ${G_F}$ and $N$ is the finite precision.  ${W_{S}}$ in Eq.~\eqref{eq4-3} is replaced with $A_{M\times M}^{\langle 1 \rangle },A_{M\times M}^{\langle 2 \rangle },A_{M\times M}^{\langle 3 \rangle },A_{M\times M}^{\langle 11 \rangle },A_{M\times M}^{\langle 12 \rangle },A_{M\times M}^{\langle 13 \rangle }$ respectively, so that the chaotic ESNs are constructed and then the root mean square error (RMSE) are obtained as shown in the Table~\ref{ddfc}, where None means this adjacency matrix ${{A}_{M\times M}}$ corresponding to the above HDDCS can not construct that size of the reservoir. Note that the order of the square matrix in Table~\ref{ddfc} is very large, and it is not convenient to give a specific $A_{M\times M}^{\langle i \rangle }$, so it is omitted here.

In Table~\ref{ddfc}, RMSE is

\begin{equation}
	\text{RMSE}=\sqrt{\frac{1}{1000}\sum\limits_{t=2001}^{3000}{{{(P(t)-O(t))}^2}}}  	
\end{equation}    
According to Table~\ref{ddfc}, as $ M = {2 ^ {mN}} $ increases, RMSE decreases, which indicates that the accuracy of the predictions is improved as  $M = {2 ^ {mN}} $ increases. On the base of given $ M = {2 ^ {mN}} $, higher-dimensional systems have higher prediction accuracy than lower-dimensional systems.

\section{Conclusion}
The literature~\cite{Wang:hddcs:IEEECSI2016} introduced and studied HDDCS and proved that HDDCS is chaotic in the Devaney's definition of chaos. This article further provides the  proof  of topological  mixing for HDDCS. In particular, this article constructs an iterative function $F$ according to the loop-state contraction algorithm, so that the state transition diagram of $G_F$ in HDDCS is strongly connected, thus a general design method for constructing HDDCS is found. The adjacency matrix corresponding to HDDCS constructed by the loop-state contraction algorithm are applied to build the chaotic ESN, and the Mackey-Glass time series is predicted, and relatively good prediction results are obtained. In short, this article expands the theoretical research of HDDCS, and also provides more construction methods and practical applications for HDDCS.

\end{document}